\documentclass[11pt]{article}
\usepackage{graphicx}
\usepackage{amsmath}
\usepackage{amsthm}
\usepackage{amssymb}
\usepackage{braket}
\usepackage[margin=1in]{geometry}
\usepackage{xcolor}
\usepackage{algorithm} 
\usepackage[noend]{algpseudocode} 
\usepackage[backref=page,colorlinks,citecolor=blue,bookmarks=true]{hyperref}
\usepackage[capitalise,noabbrev,nameinlink]{cleveref}
\usepackage{dirtytalk}
\usepackage{enumerate}
\usepackage{mathdots}
\usepackage{mathtools}
\usepackage[toc,page]{appendix}
\usepackage{soul}
\usepackage{svg}
\usepackage{xfrac}
\usepackage{thmtools}
\usepackage{float}
\usepackage{authblk}

\makeatletter
\renewcommand\AB@authnote[1]{\rlap{\textsuperscript{\normalfont#1}}}

\makeatother

\interfootnotelinepenalty=10000 

\DeclareFontFamily{U}{mathx}{}
\DeclareFontShape{U}{mathx}{m}{n}{<-> mathx10}{}
\DeclareSymbolFont{mathx}{U}{mathx}{m}{n}
\DeclareMathAccent{\widehat}{0}{mathx}{"70}
\DeclareMathAccent{\widecheck}{0}{mathx}{"71}

\newtheorem{thm}{Theorem}
\newtheorem{prop}[thm]{Proposition}
\newtheorem{lem}[thm]{Lemma}
\newtheorem{cor}[thm]{Corollary}
\newtheorem{defn}[thm]{Definition}
\newtheorem{prob}{Problem}
\newtheorem{rem}{Remark}

\usepackage{tikz}
\usetikzlibrary{quantikz2}

\newcommand{\psd}[1]{\mathbf{S}_{+}^{#1}}
\newcommand{\diag}{\mathrm{diag}}
\newcommand{\toep}{\mathrm{toep}}

\DeclareMathOperator{\Tr}{Tr}
\DeclareMathOperator{\tr}{tr}
\DeclareMathOperator{\rank}{rank}

\DeclarePairedDelimiter\ceil{\lceil}{\rceil}

\DeclarePairedDelimiter\abs{\lvert}{\rvert}

\crefname{prob}{Problem}{Problems}
\crefname{prop}{Proposition}{Propositions}
\crefname{thm}{Theorem}{Theorems}

\crefname{ineq}{Inequality}{Inequalities}
\creflabelformat{ineq}{#2{\upshape(#1)}#3} 

\crefname{alg}{Algorithm}{Algorithms}
\creflabelformat{ln}{#2{\upshape(#1)}#3} 

\crefname{ln}{Line}{Lines}
\creflabelformat{ln}{#2{\upshape(#1)}#3} 

\renewcommand{\backref}[1]{}

\renewcommand{\backrefalt}[4]{%
\ifcase #1 %
\or
[p.\ #2]%
\else
[pp.\ #2]%
\fi}


\let\originalleft\left
\let\originalright\right
\renewcommand{\left}{\mathopen{}\mathclose\bgroup\originalleft}
\renewcommand{\right}{\aftergroup\egroup\originalright}

\makeatletter
\def\@cite#1#2{\textup{[{#1\if@tempswa , #2\fi}]}}
\makeatother

\title{Translation-Invariant Quantum Algorithms \\ for Ordered Search are Optimal}
\author[1,*]{Joseph Carolan}
\author[1,**]{Andrew M.~Childs}
\author[1,\dag]{Matt Kovacs-Deak}
\author[2,\ddag]{Luke Schaeffer}
\affil[1]{Department of Computer Science, Institute for Advanced Computer Studies, and Joint Center for Quantum Information and Computer Science, University of Maryland\vspace{2pt}}
\affil[2]{David R.\ Cheriton School of Computer Science and Institute for Quantum Computing, University of Waterloo\vspace{2pt}}
\affil[ ]{
$^{*}$jcarolan@umd.edu \quad
$^{**}$amchilds@umd.edu \quad
$^{\dag}$kovacs@umd.edu \quad
$^{\ddag}$lschaeff@uwaterloo.ca
}

\date{March 26, 2025}
\begin{document}

\maketitle
\begin{abstract}
    Ordered search is the task of finding an item in an ordered list using comparison queries. The best exact classical algorithm for this fundamental problem uses $\lceil \log_{2}{n}\rceil$ queries for a list of length $n$. Quantum computers can achieve a constant-factor speedup, but the best possible coefficient of $\log_{2}{n}$ for exact quantum algorithms is only known to lie between $(\ln{2})/\pi \approx 0.221$ and $4/\log_{2}{605} \approx 0.433$. We consider a special class of translation-invariant algorithms with no workspace, introduced by Farhi, Goldstone, Gutmann, and Sipser, that has been used to find the best known upper bounds. First, we show that any bounded-error, $k$-query quantum algorithm for ordered search can be implemented by a $k$-query algorithm in this special class. Second, we use linear programming to show that the best exact $5$-query quantum algorithm can search a list of length $7265$, giving an ordered search algorithm that asymptotically uses $5 \log_{7265}{n} \approx 0.390 \log_{2}{n}$ quantum queries.
\end{abstract}

\section{Introduction}

Ordered search is a fundamental computational primitive with broad applications. The problem asks, given a sorted list of $n$ numbers, to find the location of some specified target. In the classical context, ordered search is solved by the standard binary search algorithm. In the decision tree model, any comparison-based algorithm for this problem requires at least $\log_{2}{n}$ comparisons, because the output is $\log_{2}{n}$ bits and each query returns a single bit. Binary search uses $\ceil{\log_{2}{n}}$ comparisons, and therefore is asymptotically optimal, even up to the leading constant.

Quantum algorithms offer a constant-factor speedup, but the exact value of this constant factor remains elusive. Most upper bounds to date have been achieved by exhibiting an algorithm that uses a small number $k$ of queries to search an $m$-element list exactly, and recursively applying that algorithm to obtain a $k \log_{m}{n}$ upper bound. Attention has focused on a special class of algorithms called \emph{translation-invariant algorithms} introduced by Farhi, Goldstone, Gutmann, and Sipser \cite{fggs99}. These algorithms use no workspace, and are restricted to non-query operations that act diagonally in the Fourier basis, making them easier to analyze. \cite{fggs99} exhibited a 3-query translation-invariant algorithm for searching a $52$-element list exactly, which can be recursively applied to give a $3 \log_{52}{n} \approx 0.526 \log_{2}{n}$-query algorithm. Using a different technique called \emph{pebbling}, H{\o}yer, Neerbek, and Shi gave a $\log_{3}{n}\approx 0.631 \log_{2}{n}$-query algorithm \cite{hns02}. Brookes, Jacokes, and Landahl used gradient descent to exhibit a better translation-invariant algorithm, improving the upper bound to $4 \log_{550}{n} \approx 0.439 \log_{2}{n}$ \cite{bjl04}. Most recently, Childs, Landahl, and Parillo utilized a semidefinite programming formulation to numerically find a 4-query translation-invariant algorithm for exactly searching a list of size $605$, giving the best known upper bound of $4 \log_{605}{n} \approx 0.433 \log_{2}{n}$ \cite{clp}. Lifting the requirement of an exact algorithm, Ben-Or and Hassidim gave a zero-error quantum algorithm using about $0.323 \log_{2}{n}$ queries in expectation \cite{bh07}, also based on a translation-invariant algorithm.

Limitations on quantum algorithms for ordered search have been established through a sequence of lower bounds. Buhrman and de Wolf showed a lower bound of $\Omega\bigl(\sqrt{\log{n}} / \log{\log{n}}\bigr)$ through a reduction to parity \cite{buhrman99lower}. This was improved to $\Omega\bigl(\log{n} / \log{\log{n}}\bigr)$ by Farhi, Goldstone, Gutmann and Sipser \cite{fggs98}, then to $\frac{1}{12} \log_{2}{n}\approx 0.0833 \log_{2}{n}$ by Ambainis \cite{amb99}. H{\o}yer, Neerbek, and Shi used the adversary method to prove the best known quantum lower bound of $\frac{1}{\pi} (\ln{n} - 1) \approx 0.221 \log_{2}{n}$ \cite{amb00,hns02}. Childs and Lee then proved that this is the best possible adversary lower bound, up to an additive constant, by exactly characterizing the adversary quantity of ordered search \cite{ct08}.

\subsection*{Our Results}

As mentioned above, most upper bounds for the ordered search problem have been obtained by considering the subclass of translation-invariant algorithms \cite{fggs98}. Our main result shows that there is no loss of generality in considering this subclass.

\begin{restatable}{thm}{thmMainResult}\label{thm:main_result}
    Let $\varepsilon \geq 0$, and suppose that there is an $\varepsilon$-error quantum algorithm for the $n$-element ordered search problem using $k$ queries. Then there exists an $\varepsilon$-error translation-invariant algorithm solving the same problem using the same number of queries. 
\end{restatable}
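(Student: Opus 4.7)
My plan is to reformulate the existence of an exact $k$-query algorithm as feasibility of a Gram-matrix semidefinite program (SDP), and then to produce a translation-invariant feasible solution via a group-averaging argument.

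First, I would record the Gram matrices $G^{(i)}_{tt'} = \langle\psi^{(i)}_t|\psi^{(i)}_{t'}\rangle$ of the intermediate states across targets $t$ and the position-resolved tensors $M^{(i,j)}_{tt'} = \langle\phi^{(i)}_t|(\Pi_j\otimes I_W)|\phi^{(i)}_{t'}\rangle$, where $|\phi^{(i)}_t\rangle = U_i|\psi^{(i)}_t\rangle$ and $\Pi_j$ is the projector onto position $j$. A standard Naimark-style reconstruction shows that an exact $k$-query algorithm exists iff there exist PSD matrices $(G^{(i)}, M^{(i,j)})$ satisfying $\sum_j M^{(i,j)} = G^{(i)}$, boundary conditions $G^{(0)} = J$ (all-ones) and $G^{(k)} = I$, and the query update $G^{(i+1)}_{tt'} = \sum_j (-1)^{[j\ge t]+[j\ge t']} M^{(i,j)}_{tt'}$. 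Crucially, a translation-invariant, workspace-free algorithm corresponds precisely to a feasible solution in which each $G^{(i)}$ is circulant and each $M^{(i,j)}$ has the rank-one translation-covariant form $M^{(i,j)}_{tt'} = \overline{\phi^{(i)}_0(j-t)}\,\phi^{(i)}_0(j-t')$ for some $\phi^{(i)}_0\in\mathbb{C}^n$.

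Second, I would apply the cyclic-shift average $(G^{(i)},M^{(i,j)}) \mapsto (\bar G^{(i)}, \bar M^{(i,j)})$ with $\bar G^{(i)}_{tt'} = \frac{1}{n}\sum_s G^{(i)}_{t+s,t'+s}$ and $\bar M^{(i,j)}_{tt'} = \frac{1}{n}\sum_s M^{(i,j+s)}_{t+s,t'+s}$. Averaging preserves positive semidefiniteness, the completeness equation, and the boundary conditions $J$ and $I$ (both circulant), yielding circulant $\bar G^{(i)}$ and jointly shift-invariant $\bar M^{(i,j)}$. Any circulant PSD $\bar G^{(i)}$ is realized by translation-covariant vectors $T^t|\bar\psi^{(i)}_0\rangle \in \mathbb{C}^n$ with no workspace, and translation covariance combined with the query update forces the non-query unitaries evolving these states to commute with $T$, i.e.\ to be diagonal in the Fourier basis. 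The exact-output condition $\bar G^{(k)} = I$ is preserved by averaging, so the symmetrized algorithm is still exact.

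The main obstacle is preserving the query update under this averaging, because the indicator $[j\ge t]$ is not invariant under cyclic shifts on $\mathbb{Z}/n$. I would resolve this by first embedding the linear ordered search into a cyclically symmetric extension on $\mathbb{Z}/N$ for suitable $N$, modifying the oracle so that $O_{t+1} = TO_tT^{-1}$ holds globally (e.g.\ via an arc-type oracle of the kind used elsewhere in the translation-invariant-algorithms literature), and verifying that an algorithm for this cyclic extension yields an algorithm for the original linear problem with the same number of queries. Inside the cyclic extension, the joint shift $(t,t',j)\mapsto(t+s,t'+s,j+s)$ is a genuine symmetry of every SDP constraint, so averaging legitimately produces a feasible, translation-invariant solution from which the desired algorithm is extracted. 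A secondary technical point is showing that the phases of the wavefunctions $\bar\psi^{(i)}_0$ can be chosen consistently across steps to realize the rank-one form of $\bar M^{(i,j)}$ compatibly with the query-update relations, which is a finite-dimensional feasibility question that should follow from the structure of the symmetrized solution.
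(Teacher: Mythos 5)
Your overall strategy---pass to a Gram-matrix/Barnum--Saks--Szegedy-type SDP, embed the linear problem into the cyclically symmetric doubled problem so that the oracle satisfies $O_{t+1}=TO_tT^{-1}$, and then average a feasible solution over the cyclic group---is exactly the route the paper takes. However, what you call a ``secondary technical point'' at the end is in fact the main content of the theorem, and your proposal does not close it. Averaging preserves positive semidefiniteness and the linear constraints, but it destroys the rank-one form $M^{(i,j)}_{tt'}=\overline{\phi^{(i)}_0(j-t)}\,\phi^{(i)}_0(j-t')$: a convex combination of rank-one matrices is generically of high rank, and high-rank $M^{(i,j)}$ correspond precisely to algorithms that \emph{use workspace}. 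The theorem's whole point (as the paper stresses) is removing the workspace, not merely symmetrizing, so asserting that a translation-covariant rank-one realization ``should follow from the structure of the symmetrized solution'' begs the question. The actual mechanism is nontrivial: after symmetrization the constraints depend only on the generalized traces $\Tr_j$ of the matrices, these traces are the coefficients of nonnegative Laurent polynomials, and the Fej\'er--Riesz factorization of those polynomials is what produces genuine single-register state vectors. In the paper this is packaged as a chain of structural reductions (eliminating the null-query and $\Gamma_y$ matrices, forcing the block forms $\left[\begin{smallmatrix}A&B\\B&A\end{smallmatrix}\right]$ and then $\left[\begin{smallmatrix}A&(-1)^tA\\(-1)^tA&A\end{smallmatrix}\right]$, and pinning down the boundary matrices $J_n/n$ and $I_n/n$) landing on the SDP of Childs--Landahl--Parrilo, whose equivalence to workspace-free translation-invariant algorithms is the Fej\'er--Riesz step. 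Without some version of this argument your proof establishes only that a \emph{symmetric} algorithm (possibly with workspace) exists, which was already known.

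Two smaller issues. First, your SDP omits the null-query (controlled-oracle) term that the standard BSS characterization carries; to drop it you must first show controlled access is unnecessary for the symmetrized problem (the paper does this by observing that a phase oracle cannot distinguish an input from its bitwise complement, which is why the answer is only required modulo $n$). Second, once you pass to the cyclic extension on $\mathbb{Z}/2n$, the exactness condition is no longer $G^{(k)}=I$: inputs $T^jw$ and $T^{j+n}w$ must map to the same output, so the final Gram matrix has off-diagonal entries $\alpha$ coupling $j$ and $j+n$, and one must argue $\alpha=(-1)^k$ is forced. Neither point is fatal, but both need to be addressed for the argument to go through.
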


Our proof works by studying the symmetries of the ordered search problem. First, we show that controlled access to the query oracle is unnecessary for this problem. Next, we consider a semidefinite program (SDP) of Barnum, Saks, and Szegedy that captures the quantum query complexity of an arbitrary problem \cite{bss}. We show that the symmetries of the ordered search problem can be used to reduce this SDP to another one that characterizes $\varepsilon$-error translation-invariant quantum algorithms. It is worth emphasizing that the definition of a translation-invariant algorithm as in \cite{fggs99} assumes that the algorithm uses no workspace, in addition to having translation symmetry. This means our proofs also establish that no workspace is needed by optimal algorithms for ordered search.

Our second result is an improved upper bound of $5 \log_{7265}{n} \approx 0.390 \log_{2}{n}$ for the exact ordered search problem, which follows from an exact $5$-query quantum algorithm that can search a $7265$-element list. To find this algorithm, we consider a linear programming relaxation of the characterization of \cite{fggs99}, and develop a heuristic approach to finding the algorithm by solving instances of this linear program.

\begin{thm}[Informal]
    The largest list exactly searchable by a $5$-query quantum algorithm is of size $7265$.
\end{thm}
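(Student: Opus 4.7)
The plan is to combine \Cref{thm:main_result} with the Fourier-space characterization of exact translation-invariant algorithms to reduce the question to a purely numerical feasibility problem. By \Cref{thm:main_result}, any exact $5$-query quantum algorithm for the $n$-element ordered search problem can be realized as a translation-invariant one. Following \cite{fggs99} (and as exploited in \cite{bjl04,clp}), the non-query unitaries of such an algorithm act diagonally in the Fourier basis, so the requirement that a $5$-query translation-invariant algorithm exactly search a list of size $n$ can be written as a finite system of constraints on a vector of Fourier-basis parameters. A linear programming relaxation of these constraints yields an LP, call it $\mathrm{LP}(n,5)$, whose feasibility is a necessary condition for the existence of such an algorithm, and which is (as the paper proposes) sufficient to recover an explicit algorithm when it is feasible. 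The theorem then reduces to two claims: $\mathrm{LP}(n,5)$ is feasible for $n = 7265$, and infeasible for every $n \geq 7266$.

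For the upper bound, I would construct an explicit exact $5$-query translation-invariant algorithm by solving $\mathrm{LP}(7265,5)$ heuristically. Because the LP is large and standard interior-point solvers tend to return numerically noisy solutions without useful combinatorial structure, I would mimic the paper's approach: seed the search from an intelligent initial point (for example, solutions lifted from smaller $n$ or from the $4$-query algorithm of \cite{clp}), and then iteratively re-solve the LP while perturbing the objective or the right-hand side to push $n$ upward as far as possible while preserving feasibility. After obtaining a candidate numerical primal, it must be rounded to an exact (rational or algebraic) feasible vector, from which the Fourier-diagonal unitaries can be reconstructed and the algorithm's correctness verified directly on all $n$ inputs.

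For the matching impossibility at $n = 7266$, I would certify infeasibility of $\mathrm{LP}(7266,5)$ by extracting a Farkas-type dual witness, i.e.\ a nonnegative combination of the primal constraints whose left-hand side vanishes identically while the right-hand side is strictly positive. Monotonicity in $n$ (extending a translation-invariant algorithm for a larger list to a smaller one) then rules out all $n \geq 7266$ simultaneously. The main obstacle, in my view, is not the modeling or the solving, but the \emph{rigorous} certification step on both sides: floating-point LP output is easy to obtain, but extracting exact primal and dual witnesses from it is delicate, and will likely require high-precision solvers, exploitation of any residual symmetries (to reduce the LP's dimension and make clean witnesses more plausible), and a carefully designed rounding and verification pipeline. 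I would expect this certification, rather than the heuristic search itself, to be the bottleneck of the proof.
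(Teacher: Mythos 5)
Your outline is the paper's strategy: reduce to translation-invariant algorithms via \Cref{thm:main_result}, pass to the finite parameterization of \Cref{thm:fggs_polys} (symmetric non-negative Laurent polynomials $q_0,\dots,q_5$ of degree $n-1$), relax to an LP, and combine feasibility at $n=7265$ with infeasibility at $n=7266$ and monotonicity. Two details in your plan need repair. First, be precise about what is relaxed and in which direction the relaxation certifies: the only non-linear constraint is non-negativity of the $q_t$ on the \emph{entire} unit circle, and the paper relaxes this to $q_t(z)\geq\beta$ on a finite grid $G$, maximizing $\beta$. A negative optimum certifies infeasibility, but a positive optimum does \emph{not} certify feasibility (the polynomials may dip below zero off the grid) --- contrary to your claim that LP feasibility is "sufficient to recover an explicit algorithm." The paper closes this gap with a cutting-plane loop (\Cref{alg:iteratedlp}: add the negative local minima of the returned polynomials to $G$ and re-solve) together with a separate global non-negativity check via the zeros of $q_t'$ in the Chebyshev representation; your proposed verification "on all $n$ inputs" only becomes available after a Fej\'er--Riesz factorization, which itself presupposes that global non-negativity check. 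Second, your monotonicity step as stated --- "extending a translation-invariant algorithm for a larger list to a smaller one" --- does not work directly, because padding or truncating the input breaks translation invariance; the paper makes exactly this point. The correct route (\Cref{cor:monotonicity}) is through general algorithms: a general $5$-query algorithm for any $n\geq 7266$ yields one for $7266$ by padding, and \Cref{thm:main_result} then yields a translation-invariant one for $7266$, contradicting the LP certificate. Your emphasis on extracting exact primal/dual (Farkas) witnesses from floating-point output is a fair concern and is essentially the rigorous form of the paper's "$\beta^*<0$" certificate.
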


As before, a recursive application of this algorithm leads to the claimed $5 \log_{7265}{n}$ asymptotic upper bound. 

\section{Preliminaries}
In this section we cover some of the technical background and basic notation used throughout this paper. 

\subsection{Notation}
We denote by $J_n$ the $n \times n$ all-ones matrix. We use zero-based numbering when indexing matrices by the integers. Given a matrix $A$ we denote its $(i,j)$ entry by $A[i,j]$. We denote by $A \odot B$ the Hadamard (i.e., element-wise) product of two matrices $A$ and $B$ of the same dimensions. Given an $n \times n$ square matrix $A$ and an integer $-n < j < n$, we denote by $\tr_j(A)$ the trace along the $j$th sub- or superdiagonal of $A$. Concretely, 
\begin{align}
    \tr_j(A) \coloneqq \begin{cases}
        \sum_{i=0}^{n-1-j} A[i,i+j], &j \ge 0 \\ 
        \sum_{i=0}^{n-1+j} A[i-j,i], &j< 0.
    \end{cases} 
\end{align}
We refer to the quantities $\tr_j(A)$ for $-n < j < n$ collectively as the \emph{generalized traces} of the matrix $A$.
For convenience, we let $\tr_j(A) \coloneqq 0$ for $|j| \ge n$. We denote by $\Tr_j(A)$ the $j$th cyclic trace of $A$, i.e.,
\begin{align}
    \Tr_j(A) &\coloneqq \begin{cases}
        \tr_j(A) + \tr_{j-n}(A), &0 \leq j < n \\ 
        \tr_j(A) + \tr_{j+n}(A), &-n < j < 0.
    \end{cases}
\end{align}

Let $[n]$ denote the set $\{1,\ldots,n\}$. We denote permutations $\sigma \in S_n$ using cycle notation. Given two permutations $\sigma, \tau \in S_n$ we write $\sigma\tau$ for the composition $\sigma\circ \tau$. To each $\sigma \in S_n$ we associate the permutation matrix $P_{\sigma}$ obtained by permuting the rows of the identity matrix $I_n$ according to $\sigma$. Because we use zero-based indexing throughout this paper, we regard elements of the symmetric group $S_n$ as permutations of the set $\{0,\ldots,n-1\}$, rather than the more conventional choice of $[n]$.

Given an $n$-tuple of scalars $a = (a_0, \ldots, a_{n-1})$ we denote by $\toep(a_0, \ldots, a_{n-1})$ the $n \times n$ symmetric Toeplitz matrix with first row given by $a$. We denote the set of $n \times n$ positive semidefinite matrices by $\psd{n}$.

We denote by $\mathbb{C}_n[z]$ the vector space of complex polynomials in $z$ that have degree at most $n$. For a polynomial $p(z) = \sum_j c_j z^j \in \mathbb{C}[z]$, we denote the $l_2$-norm of $p$ by $\left\Vert p \right\Vert_2$, i.e., 
\begin{align}
    \left\Vert p \right\Vert_2 = \sqrt{\sum_j \abs{c_j}^2}.
\end{align} 
Whenever the degree of $p(z)$ is understood to be $n$, we denote the coefficient vector of $p(z)$ by $\mathbf{p} \coloneqq \begin{pmatrix} c_0 & \cdots & c_{n} \end{pmatrix}^{\top} \in \mathbb{C}^{n+1}$.

For $b \in \{0,1\}$ we denote by $b^{j}$ the string $b\ldots b$ consisting of $j$ identical bits. Given two bitstrings $w_1$ and $w_2$ we denote their concatenation by $w_1 w_2$.

Throughout this paper we denote by $T$ the cyclic translation operation
\begin{align}
    T\ket{j} &= \ket{j+1} \quad \quad (\forall j \in \mathbb{Z}/2n)
\end{align}
where the arithmetic on the right-hand side is done modulo $2n$. As a matrix, $T$ is just the permutation matrix corresponding to the cycle $(0\;1\;\ldots\;2n-1)$, and acts by mapping $(\begin{matrix}v_0 & v_1 & \cdots & v_{2n-1}\end{matrix})^{\top}$ to $(\begin{matrix}v_{2n-1} & v_0 & \cdots & v_{2n-2}\end{matrix})^{\top}$. Abusing notation slightly, given a bitstring $w = w_0 w_1 \ldots w_{2n-1}$ we write $Tw$ as a shorthand for $w_{2n-1} w_0 \ldots w_{2n-2}$.

\subsection{The Ordered Search Problem}\label{sec:osp}

In the ordered search problem, we are tasked with finding the first element greater or equal to some target value $t$ in a sorted list of numbers $a_0 \leq a_1 \leq \cdots \leq a_{n-1}$, or the last element if no such value exists. We consider this problem in the so-called comparison model, where the list can only be accessed using comparisons that determine whether $a_i \geq t$. In this model, the problem can be framed as a query problem, where each instance is represented by a bitstring $x \in \{0,1\}^n$ such that $x_i = 1$ if and only if $a_i \geq t$. Given query access to $x$, the goal is to output the (zero-based) index of the first $1$ in $x$.  

\begin{prob}[Ordered Search]\label{prob:OSP}
    Let $x = 0^j1^{n-j}$ for some $j \in \{0, \dots, n-1\}$. Given query access to $x$, return $j$.
\end{prob}

Reference~\cite{fggs99} considered a symmetrized variant of this problem by replacing the string $x$ with 
\begin{align}\label{eq:osp_to_symmetrized_osp}
    y &= x \bar{x}    
\end{align}
where $\bar{x}$ denotes the bitwise negation of $x$. The benefit of this change is that the symmetrized input satisfies the translation equivariance
\begin{align}
    (T^{j} y)_i &= y_{i-j} \quad\quad (\forall i, j \in \mathbb{Z}/2n) \label{eq:translation_equivariance}
\end{align}
where the arithmetic is done modulo $2n$. Note that we index bitstrings from $0$, allowing us to do modular arithmetic for the indices as above.

\begin{prob}[Ordered Search, symmetrized]\label{prob:osp-symmetrized}
    Let $w = 1^n0^n$ and $y = T^{j} w$ for some $j \in \mathbb Z / 2n$. Given query access to $y$, return $j \bmod n$.
\end{prob}

It is clear that \cref{prob:OSP,prob:osp-symmetrized} are equivalent, as black-box reductions turn one into the other with no query overhead. Note that in \cref{prob:osp-symmetrized} we only require the answer to be returned modulo $n$, while $j$ ranges over $\mathbb{Z}/2n$. In the reduction from \cref{prob:OSP} to \cref{prob:osp-symmetrized}, the value of the shift modulo $n$ is all that is required to solve the original ordered search instance. In fact, using the phase oracle defined in \cref{sec:querymodel}, it is impossible to distinguish a bitstring from its bitwise complement. This corresponds to distinguishing a shift $j$ from a shift $(j+n) \bmod 2n$. The algorithms considered in this work use a phase oracle of this form.

\subsection{Quantum Query Algorithms}\label{sec:querymodel}
Let $y = T^{j} w$ as in \cref{prob:osp-symmetrized}. In the quantum query model, access to the Boolean oracle $j \mapsto y_j$ is customarily provided by a unitary \emph{phase oracle} acting as 
\begin{align}
    O_y \ket{j} &= (-1)^{y_j} \ket{j}.
\end{align}
While the usual model allows for controlled queries (or equivalently, phase queries with the possibility of a null query), in \cref{subsec:removing_controlled_access} we show that for the ordered search problem we can restrict our attention (with no loss of generality) to algorithms without controlled query access. 

A quantum query algorithm $\mathcal{A}$ that makes $k$ queries to the oracle $O_y$ is defined by a sequence $U_1, U_2, \ldots, U_k$ of unitary operators and an initial state $\ket{\psi_0}$. The final state of such an algorithm on input $y$ is 
\begin{align}
    \ket{\psi_k^{(y)}} = U_k O_y U_{k-1} \dots U_1 O_y \ket{\psi_0}. \label{eq:quantum_query_algorithm}
\end{align}

In the context of our problem, we say that $\mathcal{A}$ solves the ordered search problem \emph{with error $\varepsilon$} if there exists a family $\Pi_0, \dots, \Pi_{n-1}$ of orthogonal projectors with pairwise orthogonal supports such that
\begin{align}
        \left\| \Pi_{j \bmod n} \ket{\psi_k^{(T^{j} w)}} \right\|^2 &\geq 1 - \varepsilon \quad\quad (\forall j \in \{0, 1, \dots, 2n-1\}).
\end{align}

\subsection{Translation-Invariant Algorithms}\label{subsec:translation_invariant_algorithms_prelims}
We work with the symmetrized version of the ordered search problem (\Cref{prob:osp-symmetrized}), which is invariant under cyclic shift. It is natural to expect that cyclic symmetry in the problem will lead to cyclic symmetry in the algorithm. In particular, by symmetrizing the solution of a semidefinite program for quantum query algorithms presented in Ref.~\cite{bss}, one can show that any quantum query algorithm for a symmetric query problem can be chosen to be symmetric, in a certain sense.

Reference~\cite{fggs99} focuses on \say{translation-invariant} quantum algorithms, but in a narrower sense: the unitaries $U_1, \ldots, U_k$ of the algorithm act \emph{only} on the index (from $0$ to $2n-1$), and commute with the translation operator $T$. In other words, the algorithm cannot use any workspace as a side-effect of how the symmetry is defined. More formally, we consider the following class of algorithms.

\begin{defn}[Translation-Invariant Algorithm]\label{defn:translation_invariant_alg}
A $k$-query quantum algorithm $\mathcal{A}$ for the ordered search problem defined by unitaries $U_1, U_2, \ldots, U_k$, as in \cref{eq:quantum_query_algorithm}, is said to be \emph{translation-invariant} if it satisfies the following properties:
\begin{enumerate}[(i)]
    \item $\mathcal{A}$ operates on a single register of dimension $2n$.
    \item The initial state is the uniform superposition $\ket{\psi_0} = \frac{1}{\sqrt{2n}} \sum_{j=0}^{2n-1} \ket{j}$.
    \item Conjugation by the translation operator $T$ leaves the unitaries $U_t$ invariant:
    \begin{align}
        T U_t T^{-1} &= U_t \quad\quad (\forall t\in[k]).\label{eq:translation_invariant_unitary}
    \end{align}
    \item Measurement is made in the basis $\bigl\{\ket{\phi_j}: j = 0, \ldots, n-1 \bigr\}$ where
    \begin{align}
        \ket{\phi_j} = \frac{1}{\sqrt{2}} \bigl(\ket{j} + (-1)^k \ket{j+n}\bigr).
    \end{align}
\end{enumerate}
\end{defn}

The measurement basis depends on the parity of $k$ because in the momentum basis, all the mass shifts from even to odd states (and vice versa) under the query oracle.

Recall that one of our main results (\Cref{thm:main_result}) reduces any exact algorithm to a translation-invariant form. We stress that this is not just about symmetrizing the algorithm, which is possible with existing techniques, but \emph{removing the workspace}. Regrettably, the established name \say{translation-invariant} for this kind of algorithm hides this aspect of our result.

\subsection{Nonnegative Laurent Polynomials}\label{sec:laurent-polynomials}
A degree-$d$ \emph{Laurent polynomial} is a polynomial in $z$ and $z^{-1}$ of the form
\begin{align}
    q(z) &= \sum_{j=-d}^{d} a_j z^j,
\end{align}
where $a_j \in \mathbb{C}$. The Laurent polynomial $q(z)$ is said to be \emph{real-valued} if it is real-valued on the unit circle ($|z|=1$). It is easy to see that $q(z)$ is real-valued if and only if $a_j = \bar{a}_{-j}$. We say that $q(z)$ is \emph{symmetric} if $a_j = a_{-j}$. Thus, a symmetric real-valued Laurent polynomial can be parameterized by $d+1$ real parameters $a_j$ as 
\begin{align}
	q(z) &= \frac{1}{2} \sum_{j=0}^{d} a_j (z^j + z^{-j}).
\end{align}
Lastly, we say that $q(z)$ is \emph{nonnegative} if it is real-valued and nonnegative on the unit circle.

Besides their canonical expression as polynomials in $z$ and $z^{-1}$, we consider two additional ways of representing nonnegative Laurent polynomials. The first is via the Fejér-Riesz theorem, which states that any nonnegative Laurent polynomial can be factorized as the square modulus of an ordinary polynomial \cite{Fejer1916,Riesz1916}.

\begin{thm}[Fejér-Riesz]
    Let $q(z)$ be a degree-$d$ Laurent polynomial. Then $q(z)$ is nonnegative (on the unit circle) if and only if there exists a degree-$d$ polynomial $p(z) \in \mathbb{C}[z]$ such that
    \begin{align}
        q(z) &= \abs{p(z)}^2 \quad (\forall z\colon \abs{z} = 1).
    \end{align}
    Furthermore, $p(z)$ may be chosen so that all of its zeros are contained in the closed unit disk.
\end{thm}

In this context, we refer to the polynomial $p(z)$ as a \emph{spectral factor of $q(z)$}. In general there are several distinct spectral factors $p(z)$ for any nonnegative polynomial $q(z)$. However, $p(z)$ is unique (up to a complex phase) subject to the constraint that all of its zeros are in the closed unit disk \cite{Fejer1916}.

Another representation is the so-called \emph{Gram matrix representation} \cite{dumitrescu}. Recall that a Laurent polynomial $q(z) = \sum_{j=-d}^{d} a_j z^j$ is real-valued if and only if $a_j = \bar{a}_{-j}$ for all $j \in \{-d,\ldots,d\}$. An equivalent condition is that there exists $(d+1) \times (d+1)$ Hermitian matrix $Q$ such that
\begin{align}
    q(z) &= \psi(z^{-1})^{\top} Q \psi(z). \label{eq:hermitian-mx-for-real-laurent-polynomial}
\end{align}
where $\psi(z) = \begin{pmatrix} 1 & z & \cdots & z^{d} \end{pmatrix}^{\top}$.
Equivalently, the coefficients of $q(z)$ are given by generalized traces of $Q$:
\begin{align}
    a_j &= \tr_{j}(Q) \quad\quad (j \in \{-d, \ldots, d\}). \label{eq:gram-matrix-rep-trace-condition}
\end{align}
Then, $q(z)$ is nonnegative if and only if it can be represented this way by a positive semidefinite matrix $Q\succeq 0$.

\begin{lem}[{\cite[Thm.~2.5]{dumitrescu}}]
    Let $q(z) = \sum_{j=-d}^{d} a_j z^j$ be a degree-$d$ Laurent polynomial. Then $q(z)$ is nonnegative (on the unit circle) if and only if there exists a $(d+1)\times (d+1)$ Hermitian, positive semidefinite matrix $Q$ such that $a_j = \tr_{j}(Q)$ for $j \in \{-d, \ldots, d\}$.
\end{lem}

One example of a symmetric, nonnegative Laurent polynomial is the \emph{Fejér kernel}
\begin{align}
    F_n(z) \coloneqq \sum_{j = -(n-1)}^{(n-1)} \Bigl(1 - \frac{\abs{j}}{n}\Bigr) z^j. \label{eq:fejer-kernel}
\end{align}
It is easy to check that the polynomial $p(z) \coloneqq {\frac{1}{\sqrt{n}} (1 + z + \cdots + z^{n-1})}$ is a spectral factor of $F_n$. Moreover, the outer product $\mathbf{p}^{*} \mathbf{p}^{\top} = \frac{1}{n} J_n$ is a Gram matrix representation of $F_n$, where $\mathbf{p} \in \mathbb{C}^n$ is the column vector whose entries are the coefficients of $p(z)$.

Given a spectral factor $p(z)$ of a nonnegative Laurent polynomial $q(z)$, the rank-$1$ projector $\mathbf{p}^{*}\mathbf{p}^{\top}$ is a Gram matrix representation of the polynomial $q(z)$. A key step in this work exploits a partial converse of this fact.

\begin{thm}\label{thm:maximizing-top-left-entry}
    Let $q(z) = \sum_{j=-(n-1)}^{n-1} a_j z^j$ be a nontrivial, nonnegative Laurent polynomial. Suppose that $Q \in \mathbf{S}_{+}^{n}$ is a Gram matrix representation of $q(z)$ for which the top left entry $Q[0,0]$ is maximal. Then $Q$ has rank $1$. Furthermore, $Q = \mathbf{p}^{*}\mathbf{p}^{\top}$ where $\mathbf{p} \in \mathbb{C}^n$ is the coefficient vector of a spectral factor $p(z)$ of $q(z)$.
\end{thm}

For completeness, we present the proof given in \cite{dumitrescu}.
\begin{proof}
    We first show that the rank of $Q$ is $1$. 
    For each $-n < j < n$, the set of matrices $M \in \mathbb{C}^{n\times n}$ with $\tr_j{M} = a_j$ is closed. Thus so is the set of Gram matrix representations of $q(z)$, being the intersection of finitely many closed sets. Therefore, there exists a Gram representation $Q \in \mathbf{S}_{+}^n$ of $q(z)$ with maximal top left entry. As $q(z)$ is nontrivial, $Q \neq 0$. We may express $Q$ as 
    \begin{align}
        Q &= \begin{pmatrix}
            \alpha & v^{\dagger} \\ 
            v & \widehat{Q}
        \end{pmatrix},
    \end{align}
    where $\alpha$ is a positive real scalar. Since $Q$ is positive semidefinite, so are all of its conjugates. In particular,
    \begin{align}
        \begin{pmatrix}
            1 & 0 \\ 
            -\frac{v}{\alpha} & I_{n-1}
        \end{pmatrix} Q \begin{pmatrix}
            1 & 0 \\ 
            -\frac{v}{\alpha} & I_{n-1}
        \end{pmatrix}^{\dagger} &= \begin{pmatrix}
            \alpha & 0 \\
            0 & \widehat{Q} - \frac{vv^{\dagger}}{\alpha}
        \end{pmatrix} \succeq 0.
    \end{align}
    Thus the $(n-1)\times (n-1)$ submatrix $P \coloneqq \widehat{Q} - \frac{vv^{\dagger}}{\alpha}$ is positive semidefinite as well, and so is
    \begin{align}
        Q' &\coloneqq \begin{pmatrix}\alpha & v^{\dagger} \\ v & \frac{vv^{\dagger}}{\alpha}\end{pmatrix} + \begin{pmatrix}P & 0 \\ 0 & 0 \end{pmatrix}.
    \end{align} 
    Since $Q' = Q + \diag(P, 0) - \diag(0, P)$, all of the generalized traces of $Q$ and $Q'$ are identical, i.e., $Q'$ is another Gram representation of $q(z)$.

    Now suppose for contradiction that $\rank(Q) \neq 1$. Then $P \neq 0$ and in particular $P$ must have a strictly positive diagonal entry. Cyclically rotating the elements of $P$ down and to the right if necessary, we may assume that $P[0,0] > 0$. Note that this change does not affect the fact that $Q'$ is a Gram matrix representation of $q(z)$. Thus, $Q'$ is a Gram representation of $q(z)$ with $Q'[0,0] = Q[0,0] + P[0,0] > Q[0,0]$, giving us the required contradiction. 

    The \say{furthermore} part follows by a straightforward calculation. Since $Q$ is a positive semidefinite matrix of rank $1$, ${Q = \mathbf{p}^{*}\mathbf{p}^{\top}}$ for some nonzero vector $\mathbf{p} = (\begin{matrix}b_{0} & \cdots & b_{n-1}\end{matrix})^{\top}$. Then the $(i,j)$ entry of $Q$ equals $b_i^{*} b_j$, and as $Q$ represents $q(z)$,
    \begin{align}
        a_j &= \tr_{j}(Q) = \begin{cases}
            \sum_{i = 0}^{n-1-j} b_{i}^{*} b_{i+j}, \quad j \geq 0\\ 
            \sum_{i = 0}^{n-1+j} b_{i-j}^{*} b_{i}, \quad j < 0.
        \end{cases}
    \end{align}
    Finally, notice that these are precisely the coefficients of the Laurent polynomial
    \begin{align}
        p(z) p\Bigl(\frac{1}{z^*}\Bigr)^*,
    \end{align}
    where $p(z) \coloneqq \sum_{j=0}^{n-1} b_j z^j$.
\end{proof}

\subsection{Characterizing Algorithms by Polynomials}

Farhi, Goldstone, Gutmann, and Sipser developed a polynomial characterization of translation-invariant quantum algorithms for the ordered search problem. The following result is implicit in their work \cite[Section~4]{fggs99}.

\begin{restatable}{thm}{thmFGGSPolys}\label{thm:fggs-polys}
    There exists an $\varepsilon$-error, $k$-query translation-invariant algorithm for the $n$-element ordered search problem if and only if there exist polynomials $p_1, \ldots, p_k \in \mathbb{C}_{n-1}[z]$ such that
    \begin{align}
        p_0(z) &\coloneqq \frac{1}{\sqrt{n}}\Bigl(1 + z + \cdots + z^{n-1}\Bigr) \label{eq:fggs-polys-initial-step} \\ 
        \abs{p_t(z)} &= \abs{p_{t-1}(z)} &&(\forall z\colon z^n = (-1)^t,\, t \in [k]) \label{eq:fggs-polys-forward-step} \\
        \abs{p_k(0)}^2 &\geq 1-\varepsilon  \label{ineq:fggs-polys-final-ineq} \\ 
        \left\Vert p_t \right\Vert_2 &= 1 &&(t \in [k]). \label{eq:fggs-polys-normalization}
    \end{align}
\end{restatable}

For exact algorithms (corresponding to $\varepsilon = 0$), the authors considered the following equivalent characterization in terms of the nonnegative Laurent polynomials $q_t(z) \coloneqq p_t(z) p_t(1/z^{*})^*$. The main benefit of this characterization is that we no longer need to deal with absolute values, as the constraints in \cref{eq:fggs-polys-forward-step} are replaced by \cref{eq:fggs-pos-polys-forward-step}.

\begin{restatable}[\cite{fggs99}]{thm}{thmFGGSPosPolys}\label{thm:fggs-pos-polys}
    There exists an exact $k$-query translation-invariant quantum algorithm for the $n$-element ordered search problem if and only if the following program is feasible. 
    \begin{align}
    \noalign{\centering{\text{Find symmetric, nonnegative Laurent polynomials $q_0, \ldots, q_k$ of degree less than $n$ such that}}}
        q_0 &\equiv F_n \label{eq:fggs-pos-polys-init-state}\\
        q_t(z) &= q_{t-1}(z) & (\forall z \text{ with } z^n = (-1)^t, 1 \leq t \leq k) \label{eq:fggs-pos-polys-forward-step}\\
        q_k &\equiv 1 \label{eq:fggs-pos-polys-final-state}\\
        \frac{1}{2\pi} \int_0^{2\pi} q_t (e^{i\theta})\, d\theta &= 1 & (0 \leq t \leq k). \label{eq:fggs-pos-polys-norm}
    \end{align}
\end{restatable}
A follow-up work of Childs, Landahl, and Parillo used the Gram matrix representation to turn the program in \cref{thm:fggs-pos-polys} into a semidefinite program that is feasible if and only if an exact, $k$-query translation invariant algorithm exists for the $n$-element ordered search problem \cite[Theorem~3]{clp}. One limitation of their program, inherited from \cref{thm:fggs-pos-polys}, is that it only applies to the exact ($\varepsilon = 0$) case.

In this work, we extend the result of \cite{clp} by constructing an SDP that characterizes the existence of $\varepsilon$-error algorithms for the ordered search problem for all choices of $\varepsilon > 0$.

\begin{restatable}{thm}{thmCLP}\label{thm:clp}
There exists an $\varepsilon$-error translation-invariant, $k$-query quantum algorithm for the $n$-element ordered search problem if and only if the following semidefinite program is feasible:
\begin{align}
    Q^{(0)} &\coloneqq J_n/n \label{eq:clp-sdp-initial-step} \\
    \tr_j{Q^{(t)}} + (-1)^t \tr_{j-n}{Q^{(t)}} &= \tr_j{Q^{(t-1)}} + (-1)^t \tr_{j-n}{Q^{(t-1)}} &(0 < j < n, t \in [k]) \label{eq:clp-sdp-forward-step} \\ 
    Q^{(k)}[0,0] &\geq 1-\varepsilon \label{eq:clp-sdp-final-mx-inequality} \\
    \tr{Q^{(t)}} &= 1 &(t \in [k]) \label{eq:clp-sdp-normalization} \\
    Q^{(t)} &\in \psd{n}  &(t \in [k]) \label{eq:clp-sdp-semidefinite-const}
\end{align}
where $J_n$ denotes the $n \times n$ all-ones matrix.
\end{restatable}
We prove \cref{thm:clp} in \cref{subsec:generalizing-clp}. We use this theorem to show that, informally speaking, (workspace-free) translation-invariant algorithms are in fact optimal for the ordered-search problem. The key idea is that we may without loss of generality assume that the variables of the SDP are all rank-$1$ matrices. While this is a nonconvex constraint in general, it can be assumed in this particular case, as can be shown using \cref{thm:maximizing-top-left-entry}.

\section{Translation-Invariant Quantum Algorithms Are Optimal}\label{sec:invariant-algs-optimal}

As discussed in the introduction, much of the prior research on quantum algorithms for the ordered search problem focused on the special class of algorithms that are (workspace-free and) translation-invariant (\cref{defn:translation_invariant_alg}). Indeed, the best known algorithms in the exact ($\varepsilon = 0$) setting use this framework \cite{fggs99,clp}. In this section we prove that this class of algorithms is in fact optimal in the sense that for any $\varepsilon$-error algorithm that solves the $n$-element ordered search problem, there exists an $\varepsilon$-error translation-invariant algorithm that solves the problem using the same number of queries. 

\subsection{Removing Controlled Access}\label{subsec:removing_controlled_access}
The first step in our argument is to show that controlled query access is not necessary for the symmetrized ordered search problem. We use this to eliminate matrices corresponding to null queries when we consider a semidefinite programming characterization of quantum query complexity in \cref{subsec:translation_invariant_algorithms}.

The key observation is that the problem asks for the shift modulo $n$, which means that the input $x$ and its bitwise negation $\bar{x}$ correspond to the same output.

\begin{prop}\label{prop:no-null-queries}
    Any quantum algorithm for the symmetrized ordered search problem (\cref{prob:osp-symmetrized}) that has controlled access to the input oracle can be simulated by an algorithm without controlled access with the same number of queries and success probability.
\end{prop}

\begin{proof}
    Let $w$ be as in \cref{prob:osp-symmetrized}. Let $\mathcal A$ be an algorithm that makes controlled queries to the input oracle $O$. For an input string $s = T^j w$, a controlled query to $O$ acts as
    \begin{align}
        \textsf{ctrl-}O \ket{b} \ket{x} &\coloneqq (-1)^{b \cdot s_x} \ket{b} \ket{x}.
    \end{align}
    We modify $\mathcal{A}$ to replace each $\textsf{ctrl-}O$ gate with the circuit below, which uses controlled-SWAPs to apply $O$ to either an eigenstate (in this case $\ket{0}$) when $b = 0$, or to $\ket{x}$ when $b=1$.
    \begin{equation*}
    \begin{quantikz}[align equals at=2]
        \lstick{$\ket{b}$} & \ctrl{1} & \\
        \lstick{$\ket{x}$} & \gate{O} &
    \end{quantikz}
    \equiv 
    \begin{quantikz}[align equals at=2]
        \lstick{$\ket{b}$} & \ctrl{2} & & \ctrl{2} & \\
        \lstick{$\ket{x}$} & \targX{} & \ghost{O} & \targX{} & \\
        \lstick{$\ket{0}$} & \targX{} & \gate{O} & \targX{} & \\
    \end{quantikz}
    \end{equation*}
    In other words, we map
    \begin{align}
        \textsf{ctrl-}O \ket{b} \ket{i} \mapsto \underbrace{\textsf{ctrl-SWAP} \cdot (O \otimes I) \cdot \textsf{ctrl-SWAP}}_{O'} \ket{b} \ket{0} \ket{x} \label{eq:nonull-new-oracle},
    \end{align}
    where 
    \begin{align}
        \textsf{ctrl-SWAP} \ket{b} \ket{x} \ket{y} \coloneqq \begin{cases}
            \ket{b}\ket{x}\ket{y}, & \text{if $b=0$} \\
            \ket{b}\ket{y}\ket{x}, & \text{if $b=1$}. 
        \end{cases}
    \end{align}
    For simplicity, we omit the $\ket{0}$ register of \cref{eq:nonull-new-oracle} going forward. This register remains in the $\ket{0}$ state no matter the values of $\ket{b}$ and $\ket{i}$, so this loses no information. We can write the action of $O'$ as \begin{align}
        O' \ket{b}\ket{i} &= (-1)^{x_i \cdot b}\underbrace{(-1)^{x_0 \cdot (1-b)}}_{\mathsf{Phase}} \ket{b}\ket{i},
    \end{align}
    where the term labeled $\mathsf{Phase}$ is the only difference from the original oracle $O$. For any input of the form $x \in 0^j1^n0^{n-j}$ with $j > 0$, the action of $O'$ is the same as $O$ (neglecting the ancilla register), because $x_0=0$ and therefore the $\mathsf{Phase}$ term is $1$. On such inputs, $\mathcal A'$ outputs the same number as $\mathcal A$, namely $j \bmod n$.

    The remaining inputs are of the form $x=1^j0^n1^{n-j}$ for $j>0$. Such an input is simply the negation of an input of the form $0^j1^n0^{n-j}$, and an algorithm with an uncontrolled phase oracle cannot distinguish the negation as the oracles only differ by a global phase. It follows that $\mathcal A'$ outputs $j \bmod n$ on inputs of this form as well, which is the correct answer.
\end{proof}

\subsection{An SDP Characterization of Translation-Invariant Algorithms}\label{subsec:generalizing-clp}

In this section we establish our semidefinite programming characterization of translation-invariant algorithms for the ordered search problem. Our starting point is the polynomial characterization of \cite{fggs99}, which we restated below. In particular, we show that the SDP stated in our \cref{thm:clp} is feasible if and only if the conditions in \cref{thm:fggs-polys} are satisfied. 

\thmFGGSPolys*

We recall our characterization below.

\thmCLP*

This theorem is reminiscent of the characterization of exact ($\varepsilon=0$) algorithms given by Childs, Landahl, and Parillo \cite{clp}, except that the condition for the last matrix in \cref{eq:clp-sdp-final-mx-inequality} is different, allowing for a nonzero error probability. While this may seem like a minor change, establishing this generalization requires additional ingredients. In particular, our proof crucially relies on the fact that given a nonnegative Laurent polynomial, one can always choose a Gram matrix representation that has rank one (\cref{thm:maximizing-top-left-entry}). 

We begin by establishing two lemmas.

\begin{lem}\label{lem:clp-sdp-alt-forward-step}
    The constraint in \cref{eq:clp-sdp-forward-step} can be replaced by the following equation without affecting the feasibility of the SDP stated in \cref{thm:clp}: 
    \begin{align}
        \sum_{j=-(n-1)}^{(n-1)} \tr_{j}(Q^{(t)}) z^j &= \sum_{j=-(n-1)}^{(n-1)} \tr_{j}(Q^{(t-1)}) z^j &(\forall z\colon\ z^n = (-1)^t, t\in [k]).  \label{eq:clp-sdp-forward-step-alt}
    \end{align}
\end{lem}

\begin{proof}
    First observe that the variables $Q^{(1)}, \ldots, Q^{(k)} \in \mathbf{S}_{+}^n$ of the SDP can be assumed to be real symmetric matrices. Indeed, if $Q^{(1)}, \ldots, Q^{(k)}$ give a feasible solution, then it is easy to see that so do the complex conjugates $\overline{Q}^{(1)}, \ldots, \overline{Q}^{(k)}$ of these matrices. By convexity, the matrices $\widehat{Q}^{(t)} \coloneqq (Q^{(t)} + \overline{Q}^{(t)})/2$ give another solution, and these are real symmetric. Note that the same argument works even if \cref{eq:clp-sdp-forward-step} is replaced by \cref{eq:clp-sdp-forward-step-alt}. Therefore we may restrict our attention to real symmetric matrices. 
    
    We show that under the assumption that $Q^{(1)}, \ldots, Q^{(k)} \in \mathbf{S}_{+}^n$ are real symmetric matrices of trace $1$, the constraints in \cref{eq:clp-sdp-forward-step,eq:clp-sdp-forward-step-alt} are equivalent. For each $t \in \{0, \ldots, k\}$, let $q_t(z)$ be the nonnegative Laurent polynomial represented by $Q^{(t)}$ through the Gram matrix representation. Since the matrix $Q^{(t)}$ is symmetric, so is the polynomial $q_t(z)$. Thus we may parameterize it as 
    \begin{align}
        q_t(z) &= \frac{1}{2} \sum_{j = 0}^{n-1} a_j^{(t)} (z^j + z^{-j}).
    \end{align}
    \Cref{eq:clp-sdp-forward-step} is equivalent to
    \begin{align}
        a^{(t)}_j + (-1)^t a^{(t)}_{n-j} &= a^{(t-1)}_j + (-1)^t a^{(t-1)}_{n-j} &(0 < j < n, t\in [k]) \label{eq:clp-sdp-forward-step-alt-coeff-expr} \\
        \intertext{On the other hand, \cref{eq:clp-sdp-forward-step-alt} is equivalent to the following equation:}
        q_t(z) &= q_{t-1}(z) &(\forall z\colon\ z^n = (-1)^t, t \in [k]). \label{eq:pos-poly-forward-step-restated}
    \end{align} 
    We claim that \cref{eq:clp-sdp-forward-step-alt-coeff-expr,eq:pos-poly-forward-step-restated} are equivalent under the assumption that the matrices $Q^{(t)}$ have trace one. This is essentially shown in \cref{prop:redundant-eqs}, which is proven in \cref{sec:lp-eliminating-variables}. More precisely, \cref{prop:redundant-eqs} shows that \cref{eq:pos-poly-forward-step-restated} is equivalent to
    \begin{align}
        (I_n + (-1)^t V_n)(a^{(t)} - a^{(t-1)}) &= 0 \quad (\forall t \in [k]) \label{eq:clp-sdp-forward-step-alt-mx-exp}
    \end{align}
    where $V_n = \diag(I_1, X_{n-1})$, with $X_{n-1}$ denoting the $(n-1)\times(n-1)$ permutation matrix with $1$s on the anti-diagonal. This equation is almost identical to \cref{eq:clp-sdp-forward-step-alt-coeff-expr}, except that it requires $a^{(t)}_0 = a^{(t-1)}_0$ whenever $t$ is even. 
    However, the constraint that $\tr{Q^{(t)}} = 1$ (for all $t \in [k]$) is equivalent to $a_0^{(t)} = 0$ (for all $t \in [k]$), and under this constraint \cref{eq:clp-sdp-forward-step-alt-mx-exp} is indeed equivalent to \cref{eq:clp-sdp-forward-step-alt-coeff-expr}.
\end{proof}

The next lemma is a corollary of \cref{thm:maximizing-top-left-entry}.

\begin{lem}\label{lem:clp-rank-1-mxs}
    Consider the SDP from \cref{thm:clp} with the constraint in \cref{eq:clp-sdp-forward-step} replaced by \cref{eq:clp-sdp-forward-step-alt}. The variables $Q^{(t)}$ of this SDP can, without loss of generality, be taken to have rank $1$.
\end{lem}

\begin{proof}
    Suppose that $Q^{(1)}, \ldots, Q^{(k)} \in \mathbf{S}_{n}^{+}$ is a feasible solution of the SDP. For each $t \in [k]$, we can associate to $Q^{(t)}$ a nonnegative Laurent polynomial $q_t(z)$ of degree $n-1$ through the Gram matrix representation. By \cref{thm:maximizing-top-left-entry} this polynomial is also represented by some rank-$1$ matrix $\widetilde{Q}^{(t)} \in \mathbf{S}_n^{+}$. In particular, $\widetilde{Q}^{(t)}$ is such that all of the generalized traces of $Q^{(t)}$ and $\widetilde{Q}^{(t)}$ are equal:
    \begin{align}
        \tr_j{\widetilde{Q}^{(t)}} &= \tr_j{Q^{(t)}} \quad (-n < j < n).
    \end{align}
    We conclude that the rank-$1$ matrices $\widetilde{Q}^{(1)}, \ldots, \widetilde{Q}^{(k)}$ satisfy the constraints in \cref{eq:clp-sdp-forward-step-alt,eq:clp-sdp-normalization}. Finally, \cref{thm:maximizing-top-left-entry} guarantees that the top left entry of $\widetilde{Q}^{(k)}$ is maximal among Gram matrix representations of $q_k(z)$, and thus the constraint in \cref{eq:clp-sdp-final-mx-inequality} is also satisfied: 
    \begin{align}
        \widetilde{Q}^{(k)}[0,0] &\geq Q^{(k)}[0,0] \geq 1-\varepsilon.
    \end{align}
\end{proof}

We now have all the ingredients of the proof of \cref{thm:clp}.

\begin{proof}[Proof (\cref{thm:clp}).]
    We show that for any $k, n\in \mathbb{N}$, the conditions of \cref{thm:fggs-polys} are met if and only if the SDP stated in the theorem is satisfied. We consider the SDP with the constraint of \cref{eq:clp-sdp-forward-step} replaced by \cref{eq:clp-sdp-forward-step-alt}, as this change does not affect feasibility (as shown in \cref{lem:clp-sdp-alt-forward-step}).

    We define a map $\mathbb{C}_{n-1}[z]\setminus\{0\} \to \mathbf{S}_{+}^n$ from the set of nontrivial polynomials of degree less than $n$ to the set of $n \times n$ positive semidefinite matrices as follows.
    Given a nonzero polynomial $p(z) = \sum_{j=0}^{n-1} b_j z^j$, define the positive semidefinite matrix $Q \coloneqq \mathbf{p}^{*}\mathbf{p}^{\top}$, where $ \mathbf{p}$ is the $n$-dimensional column vector whose $j$th entry is $b_j$. 
    This is a surjection onto the set of rank-$1$ positive semidefinite matrices that maps two polynomials that differ by a complex phase to the same matrix.

    Given a list $(p_t)_{t=1}^{k}$ of nonzero polynomials $p_t(z) = \sum_{j=0}^{n-1} b^{(t)}_j z^j$, let $(Q^{(t)})_{t=1}^{k}$ be their images under this mapping. We show that the polynomials $(p_t)_{t=1}^{k}$ satisfy the constraints of \cref{thm:fggs-polys} if and only if the matrices $\bigl(Q^{(t)}\bigr)_{t=1}^{k}$ are a feasible solution of the SDP (with the constraint in \cref{eq:clp-sdp-forward-step} replaced by \cref{eq:clp-sdp-forward-step-alt}). We examine the constraints of \cref{thm:fggs-polys} one by one:

    \begin{enumerate}[(i)]
        \item For each $t \in \{0, \ldots, k\}$ define the formal Laurent polynomial 
        \begin{align}
            q_t(z) &\coloneqq p_t(z) p_t\Bigl(\frac{1}{z^{*}}\Bigr)^{*}.
        \end{align} 
        By construction $q_t(z) = \abs{p_t(z)}^2$ for any $z$ on the unit circle, and the constraint in \cref{eq:fggs-polys-forward-step} is equivalent to
        \begin{align}
            q_t(z) &= q_{t-1}(z) \quad (\forall z\colon\ z^n = (-1)^t, 1 \leq t \leq k). \label{eq:pos-poly-forward-step}
        \end{align}
        Note that $q_t(z)$ is nonnegative on the unit circle and $Q^{(t)}$ is a Gram matrix representation of it. In particular, the above constraint can be expressed in terms of the matrices $Q^{(t)}$ as 
        \begin{align}
            \sum_{j=-(n-1)}^{(n-1)} \tr_{j}(Q^{(t)}) z^j &= \sum_{j=-(n-1)}^{(n-1)} \tr_{j}(Q^{(t-1)}) z^j \quad\quad (\forall z\colon\ z^n = (-1)^t, 1 \leq t \leq k).
        \end{align}
        This is just \cref{eq:clp-sdp-forward-step-alt}, as desired.
        \item Observe that the top left entry of $Q^{(k)}$ is the square modulus of the constant term of $p_k(z)$, i.e.,
        \begin{align}
            Q^{(k)}[0,0] = \bigl\vert b^{(k)}_0 \bigr\vert^2 = \left\vert p_k(0) \right\vert^2.
        \end{align}
        Thus, the constraint capturing the success probability (\cref{ineq:fggs-polys-final-ineq}) is equivalent to \cref{eq:clp-sdp-final-mx-inequality}. 
        \item By construction,
        \begin{align}
            \tr{Q^{(t)}} &= \sum_{j=0}^{n-1} Q^{(t)}[j,j] = \sum_{j=0}^{n-1} \bigl\vert b^{(t)}_j\bigr\vert^2 = \Vert p_t \Vert_2^2.
        \end{align}
        Hence the normalization constraint (\cref{eq:fggs-polys-normalization}) is equivalent to \cref{eq:clp-sdp-normalization}.
    \end{enumerate}
    We have shown that the conditions in \cref{thm:fggs-polys} are met if and only if the SDP (with \cref{eq:clp-sdp-forward-step} replaced by \cref{eq:clp-sdp-forward-step-alt}) has a feasible solution that consists entirely of rank-$1$ matrices. This restriction, however, is with no loss of generality. Indeed, by \cref{lem:clp-rank-1-mxs}, whenever this SDP is feasible, there exists a feasible solution that consists of rank-$1$ matrices only. 
    
    We conclude that the SDP stated in the theorem, with \cref{eq:clp-sdp-forward-step} replaced by \cref{eq:clp-sdp-forward-step-alt}, is feasible if and only if the conditions of \cref{thm:fggs-polys} are met. By \cref{lem:clp-sdp-alt-forward-step}, the same is true of the original SDP as well.
\end{proof}

\subsection{Symmetries and the Barnum-Saks-Szegedy SDP}

In this section, we review a semidefinite programming characterization of quantum query complexity developed by Barnum, Saks, and Szegedy \cite{bss}, and study how symmetries of a problem manifest in terms of the solutions of this SDP. 

For $X\subseteq \{0,1\}^n$, let $f\colon X \to Y$ be a function, $k$ be a positive integer, and $\varepsilon \in [0, 1/2)$. Consider the task of deciding whether there exists a $k$-query quantum algorithm that computes $f$ with error at most $\varepsilon$. To this end Barnum, Saks, and Szegedy defined the following a semidefinite program, and showed that the feasibility of this program determines the answer to the question.

\begin{defn}[{\cite[Section 3]{bss}}]\label{defn:bss-sdp}
Let $X \subseteq \{0,1\}^n$, $f\colon X \to Y$ be a function, $k$ be a positive integer, and $\varepsilon \in [0, 1/2)$. 
\begin{itemize}
    \item Let $J \coloneqq J_{\abs{X}}$ denote the $\abs{X} \times \abs{X}$ all-ones matrix.
    \item For $i = 0, \ldots, n-1$, let $E_i$ be the $\abs{X} \times \abs{X}$ matrix given by $E_i[x,x'] = (-1)^{x_i+x'_i}$. 
    \item For each $y \in Y$, let $\Delta_y$ denote the diagonal matrix with $\Delta_y[x,x] = \delta_{f(x), y}$.
\end{itemize}
Let $P(f,k,\varepsilon)$ be the following semidefinite program:
    \begin{align}
      J &= \sum_{i=0}^{n-1} M_i^{(0)} + N^{(0)} \label{eq:bss-initial-step} \\ 
     \sum_{i=0}^{n-1} E_i \odot M_i^{(t)} + N^{(t)} &= \begin{cases}
        \sum_{i=0}^{n-1} M_i^{(t+1)} + N^{(t+1)}, &t = 0, \ldots, k-2 \\ 
        \sum_{y \in Y} \Gamma_y, &t = k-1
    \end{cases} \label{eq:bss_forward_step} \\ 
    \Delta_y \odot \Gamma_y &\geq (1-\varepsilon) \Delta_y \quad (\forall y \in Y) \label[ineq]{eq:bss_entrywise_ineq} \\ 
    M_i^{(t)}, N^{(t)}, \Gamma_y &\in \psd{\lvert X \rvert}.
\end{align}
\end{defn}

\begin{thm}[{\cite[Theorem 1]{bss}}]\label{thm:bss_thm}
    Let $X, k, \varepsilon$ and $f\colon X \to Y$ be as in \cref{defn:bss-sdp}. Then there exists a $k$-query quantum algorithm computing $f$ with error at most $\varepsilon$ if and only if the program $P(f,k,\varepsilon)$ is feasible.
\end{thm}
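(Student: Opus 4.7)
The plan is to prove both directions by identifying the SDP variables with Gram matrices of states that arise during execution of a quantum query algorithm. Model the algorithm's registers as $\ket{i}\ket{w}$, where $i \in \{0, 1, \ldots, n\}$ indexes the query register (with $i = 0$ designated as a ``null'' branch) and $\ket{w}$ is workspace; the oracle acts as $O_x\ket{i}\ket{w} = (-1)^{x_i}\ket{i}\ket{w}$ for $i \geq 1$ and trivially for $i = 0$.

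For the forward direction, given an algorithm $\mathcal{A}$ computing $f$ with error at most $\varepsilon$, let $\ket{\psi_t^{(x)}}$ denote its state after $t$ queries on input $x$, decompose $\ket{\psi_t^{(x)}} = \sum_{i=0}^n \ket{i}\ket{\phi_i^{(x,t)}}$, and define $M_i^{(t)}[x,x'] \coloneqq \braket{\phi_i^{(x,t)}|\phi_i^{(x',t)}}$ for $i \geq 1$ together with $N^{(t)}[x,x'] \coloneqq \braket{\phi_0^{(x,t)}|\phi_0^{(x',t)}}$; these are Gram matrices, hence PSD. Constraint \cref{eq:bss_initial_step} follows because $\ket{\psi_0^{(x)}}$ is independent of $x$, so its Gram matrix is $J$. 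The transition \cref{eq:bss_forward_step} follows because applying $O_x$ multiplies the pairwise inner product on each $i \geq 1$ branch by $(-1)^{x_i + x'_i}$ (yielding $E_i \odot M_i^{(t)}$) and leaves the $i = 0$ branch unchanged, while the subsequent unitary $U_{t+1}$ preserves inner products. For the final measurement, let $\Pi_y$ be the projector onto outcomes answering $y$ and set $\Gamma_y[x,x'] \coloneqq \braket{\psi_k^{(x)}|\Pi_y|\psi_k^{(x')}}$; this is the Gram matrix of the $\Pi_y \ket{\psi_k^{(x)}}$, hence PSD, and its $(x,x)$ entry is the success probability on input $x$, so correctness of $\mathcal{A}$ yields \cref{eq:bss_entrywise_ineq}.

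For the reverse direction, given a feasible solution, extract vectors $\ket{\phi_i^{(x,t)}}$ realizing each $M_i^{(t)}$ and $N^{(t)}$ as Gram matrices (via, e.g., Cholesky factorization) and assemble $\ket{\tilde\psi_t^{(x)}} = \sum_i \ket{i}\ket{\phi_i^{(x,t)}}$. Constraint \cref{eq:bss_initial_step} forces the initial Gram matrix to be the rank-one matrix $J$, so all $\ket{\tilde\psi_0^{(x)}}$ coincide with a single, input-independent initial state. Constraint \cref{eq:bss_forward_step} asserts that the tuples $\{O_x\ket{\tilde\psi_t^{(x)}}\}_x$ and $\{\ket{\tilde\psi_{t+1}^{(x)}}\}_x$ have identical Gram matrices, so by the standard fact that any two tuples of vectors with equal Gram matrices are related by one partial isometry (extendable to a unitary on a sufficiently enlarged Hilbert space), we obtain a unitary $U_{t+1}$ mapping the former to the latter simultaneously across all $x$. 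Finally, factorize each $\Gamma_y$ into vectors $\ket{\xi_y^{(x)}}$; the $t = k-1$ instance of \cref{eq:bss_forward_step} says $\sum_y \Gamma_y$ equals the final-state Gram matrix, which lets one realize the $\ket{\xi_y^{(x)}}$ as the components $\Pi_y \ket{\psi_k^{(x)}}$ of a projective measurement (after possible enlargement), and \cref{eq:bss_entrywise_ineq} then guarantees success probability at least $1 - \varepsilon$ on every input.

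The main obstacle is in the reverse direction: producing a \emph{single} unitary $U_{t+1}$ that works simultaneously for every input, rather than a family of input-dependent maps. The enabling structural lemma is that two finite tuples of vectors with identical Gram matrices differ by one partial isometry, which promotes the input-wise inner-product identities of \cref{eq:bss_forward_step} to a single algorithmic ingredient. A secondary subtlety is assembling the $\Gamma_y$ into a valid POVM that is compatible with the final-state Gram matrix; the $t = k-1$ transition identity is precisely what ensures that the induced measurement operators sum to the identity on the subspace spanned by the final states.
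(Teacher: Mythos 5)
This theorem is stated in the paper as a citation to Barnum--Saks--Szegedy and is not proved there, so there is no in-paper argument to compare against. Your proposal is a correct reconstruction of the standard proof from that reference: Gram matrices of the branch components $\ket{\phi_i^{(x,t)}}$ give feasibility in the forward direction, and in the reverse direction the factorization of the $M_i^{(t)}$, $N^{(t)}$, $\Gamma_y$ together with the fact that tuples of vectors with equal Gram matrices are related by a single (extendable) partial isometry yields the unitaries and the final measurement, with \cref{eq:bss_entrywise_ineq} giving the success probability.
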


\begin{rem}\label{rem:bss_gramians} 
    It is also shown implicitly in Ref.~\cite{bss} that the matrices $G^{(t)} \coloneqq \sum_{i = 0}^{n-1} M_{i}^{(t)} + N^{(t)}$ for $t = 0, \ldots, k-1$ can be taken without loss of generality to be Gram matrices of unit-norm vectors. Note that, as is the case for $t = 0$, these vectors are not necessarily distinct.
\end{rem}

We will extensively use symmetries in our study of the ordered search problem. We formalize the notion of a symmetry of a function $f$ as a $3$-tuple of permutations of the input set, index set, and output set, that satisfy certain relations involving $f$.
 
\begin{defn}[Symmetry of a query problem]\label{defn:symmetry}
    Let $X \subseteq \{0,1\}^n$ and let $f\colon X \to Y$ and $(E_i)_{i=0}^{n-1}$ be as in \cref{defn:bss-sdp}. A symmetry of the function $f$ is a 3-tuple $(\pi, \rho, \sigma)$ of permutations of, respectively, the inputs $X$, query locations $\{0, \ldots, n-1\}$, and outputs $Y$ such that 
    \begin{align}
        E_{\rho(i)}[x,x'] &= E_{i}[\pi(x),\pi(x')] &\forall i \in \{0, \ldots, n-1\}, \forall x, x' \in X \label{eq:symmetry_defn_E_mxs} \\ 
        \sigma(f(x)) &= f(\pi(x)) &\forall x \in X. \label{eq:symmetry_defn_mu_and_sigma}
    \end{align}
\end{defn}

In this definition, $\pi$ permutes the inputs, $\rho$ permutes the query locations, and $\sigma$ permutes the outputs. Next we describe a composition operation for symmetries.

\begin{rem}\label{rem:composition_of_symmetries}
Given a function $f\colon X \to Y$, let the matrices $(E_i)_{i=0}^{n-1}$ be as in \cref{defn:symmetry}, and suppose that $(\pi_1, \rho_1, \sigma_1)$ and $(\pi_2, \rho_2, \sigma_2)$ are symmetries of $f$. Then 
\begin{align}
    E_{\rho_1\rho_2(i)}[x,x'] &= E_{\rho_2(i)}[\pi_1(x), \pi_1(x')] = E_{i}[\pi_2\pi_1(x), \pi_2\pi_1(x')] \quad \text{ and } \\ 
    \sigma_2\sigma_1(f(x)) &= \sigma_2(f(\pi_1(x))) = f(\pi_2\pi_1(x)).
\end{align}
Thus we see that the symmetries of $f$ form a group under the composition operation
\begin{align}
    (\pi_2, \rho_2, \sigma_2) \circ (\pi_1, \rho_1, \sigma_1) \coloneqq (\pi_2 \circ \pi_1,\> \rho_1 \circ \rho_2,\> \sigma_2\circ\sigma_1). \label{eq:composition_of_symmetries}
\end{align} 
Note that in the second coordinate, the order of composition is reversed.
\end{rem}

Next we show how symmetries transform the $\Delta$ matrices of \cref{defn:bss-sdp}.
\begin{prop}\label{prop:conjugating_delta_mx}
    Suppose that $(\pi, \rho, \sigma)$ is a symmetry of a function $f\colon X \to Y$, and let the matrices $\Delta_y \in \mathbb{R}^{\abs{X} \times \abs{X}}$ be defined as the diagonal matrices with $\Delta_y[x,x] = \delta_{f(x), y}$. Then for any $y \in Y$,
    \begin{align}
        \Delta_y &= P^{-1}_{\pi} \Delta_{\sigma(y)} P_{\pi}, \label{eq:conjugating_delta_mx}
    \end{align}
    where $P_\pi$ is the permutation matrix associated with $\pi$.
\end{prop}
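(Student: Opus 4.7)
The plan is to verify the identity \cref{eq:conjugating_delta_mx} by direct computation on matrix entries. Since $\Delta_{\sigma(y)}$ is diagonal, conjugating it by the permutation matrix $P_\pi$ yields another diagonal matrix, so both sides of the claimed identity are diagonal; it therefore suffices to compare diagonal entries indexed by $x \in X$.

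First, I would unpack $\bigl(P^{-1}_\pi \Delta_{\sigma(y)} P_\pi\bigr)[x,x]$. Using the preliminaries' convention that $P_\pi$ is obtained by permuting the rows of the identity by $\pi$, conjugation by $P_\pi$ acts on a diagonal matrix by permuting its diagonal entries. Concretely, the sum defining the $(x,x)$ entry collapses, using the diagonality of $\Delta_{\sigma(y)}$, to $\Delta_{\sigma(y)}[\pi(x), \pi(x)] = \delta_{f(\pi(x)),\, \sigma(y)}$.

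Next, I would invoke the second defining property of a symmetry, namely $\sigma(f(x)) = f(\pi(x))$ from \cref{eq:symmetry_defn_mu_and_sigma}. This rewrites the entry as $\delta_{\sigma(f(x)),\, \sigma(y)}$, and since $\sigma\colon Y \to Y$ is a bijection we have $\delta_{\sigma(f(x)),\, \sigma(y)} = \delta_{f(x),\, y} = \Delta_y[x,x]$, completing the entrywise match.

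The argument is essentially bookkeeping, and I anticipate no substantive obstacle. The only point requiring a little care is fixing the convention for $P_\pi$ so that conjugation permutes the diagonal in the expected direction; had the convention put $\pi^{-1}$ on the diagonal entries instead, substituting $x \mapsto \pi^{-1}(x)$ into the symmetry relation and again using bijectivity of $\sigma$ yields the same conclusion. Note also that \cref{eq:symmetry_defn_E_mxs} plays no role here, as the $\Delta$ matrices depend only on $f$ and not on the oracle matrices $E_i$.
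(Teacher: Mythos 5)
Your proposal is correct and matches the paper's proof essentially verbatim: both reduce to comparing diagonal entries, identify $\bigl(P^{-1}_{\pi}\Delta_{\sigma(y)}P_{\pi}\bigr)[x,x]$ with $\delta_{f(\pi(x)),\sigma(y)}$, and then apply \cref{eq:symmetry_defn_mu_and_sigma} together with the bijectivity of $\sigma$ (the paper simply runs the chain of equalities in the opposite direction). Your extra remark about the $P_\pi$ convention is a reasonable precaution but not needed beyond what the paper does.
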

\begin{proof} First note that the matrix on the right-hand side is also diagonal. The equality of the diagonals can be verified as follows:
    \begin{align}
        \Delta_{y}[x, x] 
        &= \delta_{f(x), y} && \text{(By definition)} \\
        &= \delta_{\sigma(f(x)), \sigma(y)} && \text{($\sigma$ is a bijection)} \\ 
        &= \delta_{f(\pi(x)), \sigma(y)} && \text{(\cref{eq:symmetry_defn_mu_and_sigma})} \\ 
        &= (P^{-1}_{\pi} \Delta_{\sigma(y)} P_{\pi})[x,x]. 
    \end{align}
\end{proof}

Now we prove that symmetries generate new solutions from a given solution to the Barnum-Saks-Szegedy SDP.
\begin{lem}\label{lem:bss_sdp_symmetry}
    Consider the program $P(f, k, \varepsilon)$, and suppose that $(\pi, \rho, \sigma)$ is a symmetry of $f$. 
    If the matrices $M^{(t)}_i, N^{(t)}, \Gamma_y$ constitute a solution to $P(f, k, \varepsilon)$, then so do matrices 
    \begin{align}
        \widetilde{M}^{(t)}_i &\coloneqq P^{-1} M^{(t)}_{\rho(i)} P & \widetilde{N}^{(t)} &\coloneqq P^{-1} N^{(t)} P & \widetilde{\Gamma}_y &\coloneqq P^{-1} \Gamma_{\sigma(y)} P \label{eq:bss_sdp_new_sln_via_symmetry}
    \end{align}
    where $P \coloneqq P_{\pi}$ is the permutation matrix corresponding to $\pi$. 
\end{lem}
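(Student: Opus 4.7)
The plan is to verify each constraint of the SDP $P(f,k,\varepsilon)$ for the transformed matrices $\widetilde{M}^{(t)}_i$, $\widetilde{N}^{(t)}$, $\widetilde{\Gamma}_y$ in turn. The semidefiniteness constraints hold immediately, since $P$ is a permutation matrix and hence orthogonal, so $A \in \psd{\lvert X \rvert}$ implies $P^{-1} A P \in \psd{\lvert X \rvert}$. The initial condition \cref{eq:bss_initial_step} reduces to a brief calculation: reindexing the sum over $i$ by the permutation $\rho$ and factoring the conjugation by $P$ outside gives
\begin{align}
\sum_{i=1}^n \widetilde{M}_i^{(0)} + \widetilde{N}^{(0)} = P^{-1}\Bigl(\sum_{i=1}^n M_i^{(0)} + N^{(0)}\Bigr) P = P^{-1} J P = J,
\end{align}
where the last equality uses the invariance of the all-ones matrix under conjugation by any permutation.

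The technical heart of the argument is the forward-step equation \cref{eq:bss_forward_step}, which I would handle by massaging the two sides independently. Reindexing by $\rho$ and applying the original equation, the RHS for $t \le k-2$ becomes $P^{-1}\bigl(\sum_i E_i \odot M_i^{(t)} + N^{(t)}\bigr) P$. For the LHS, the non-trivial ingredient is the identity
\begin{align}
\sum_i E_i \odot \widetilde{M}^{(t)}_i = P^{-1}\Bigl(\sum_j E_j \odot M^{(t)}_j\Bigr) P,
\end{align}
which I would verify entrywise using $(P^{-1} A P)[x,x'] = A[\pi(x), \pi(x')]$ together with the symmetry relation \cref{eq:symmetry_defn_E_mxs}. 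Combined with $\widetilde{N}^{(t)} = P^{-1} N^{(t)} P$, the LHS then matches the RHS as needed. The case $t = k-1$ is treated analogously: the new RHS $\sum_y \widetilde{\Gamma}_y = \sum_y P^{-1} \Gamma_{\sigma(y)} P$ reindexes (via the permutation $\sigma$ of $Y$) to $P^{-1}\bigl(\sum_y \Gamma_y\bigr) P$, which by the original SDP equals $P^{-1}\bigl(\sum_i E_i \odot M_i^{(k-1)} + N^{(k-1)}\bigr) P$, matching the new LHS by the same identity.

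For the measurement inequality \cref{eq:bss_entrywise_ineq}, the key is to apply \cref{prop:conjugating_delta_mx} to write $\Delta_y = P^{-1} \Delta_{\sigma(y)} P$, so that both factors of the Hadamard product are conjugated by the same $P$:
\begin{align}
\Delta_y \odot \widetilde{\Gamma}_y = (P^{-1} \Delta_{\sigma(y)} P) \odot (P^{-1} \Gamma_{\sigma(y)} P) = P^{-1}(\Delta_{\sigma(y)} \odot \Gamma_{\sigma(y)}) P \geq (1-\varepsilon) P^{-1} \Delta_{\sigma(y)} P = (1-\varepsilon) \Delta_y,
\end{align}
where the middle equality uses the standard identity $(P^{-1} A P) \odot (P^{-1} B P) = P^{-1}(A \odot B) P$ for permutation matrices $P$, and the inequality is the original constraint applied to $\sigma(y)$, noting that conjugation by $P$ preserves the entrywise order on diagonal matrices.

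I expect the main obstacle to be the bookkeeping in the forward-step identity above. The subscript $\rho(i)$ appearing in the definition of $\widetilde{M}^{(t)}_i$ must be brought into alignment with the subscript of $E_i$ after applying \cref{eq:symmetry_defn_E_mxs}, and the reindexing of the summation variable in the outer sum must be carried out consistently with the permutation of matrix entries $[x,x'] \mapsto [\pi(x),\pi(x')]$. Beyond this, the remaining steps amount to routine manipulations of PSD and diagonal matrices under conjugation by permutations.
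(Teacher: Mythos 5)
Your proposal is correct and follows essentially the same route as the paper: conjugation-invariance of $J$ for the initial constraint, the identity $E_i \odot \widetilde{M}^{(t)}_i = P^{-1}\bigl(E_{\rho(i)} \odot M^{(t)}_{\rho(i)}\bigr)P$ from \cref{eq:symmetry_defn_E_mxs} plus reindexing by $\rho$ (and $\sigma$) for the forward step, and \cref{prop:conjugating_delta_mx} for the measurement inequality. The only difference is that you make the preservation of positive semidefiniteness explicit, which the paper leaves implicit.
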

\begin{proof} 
    First note that the initial constraint (\cref{eq:bss-initial-step}) is satisfied since conjugation by a permutation matrix leaves the all-ones matrix invariant:
    \begin{align}
        J = P^{-1} J P = P^{-1} \Biggl(\sum_{i=1}^n M^{(0)}_i + N^{(0)} \Biggr) P = \sum_{i=1}^n \widetilde{M}^{(0)}_i + \widetilde{N}^{(0)}.
    \end{align} 

    Next, \cref{eq:symmetry_defn_E_mxs} implies $E_{i} = P^{-1} E_{\rho(i)} P$. Therefore,
    \begin{align}
        E_i \odot \widetilde{M}^{(t)}_{i} &= \Bigl(P^{-1} E_{\rho(i)} P\Bigr) \odot \Bigl(P^{-1} M_{\rho(i)}^{(t)} P\Bigr) \\
        &= P^{-1} \Bigl(E_{\rho(i)} \odot M^{(t)}_{\rho(i)}\Bigr) P.
    \end{align}

    Consider the forward step (\cref{eq:bss_forward_step}), and notice that the conjugation by $P$ commutes with the sums on both sides of the equation. The permutation $\rho$ leaves the sums over $i$ invariant, while the permutation $\sigma$ leaves the sum over $y$ invariant (for the case of $t = k-1$). Thus the forward step (\cref{eq:bss_forward_step}) is also satisfied by the matrices defined in \cref{eq:bss_sdp_new_sln_via_symmetry}.
    
    Finally, by \cref{prop:conjugating_delta_mx},
    \begin{align}
        \Delta_y \odot \widetilde{\Gamma}_y &= \Bigl(P^{-1} \Delta_{\sigma(y)} P\Bigr) \odot \Bigl(P^{-1} \Gamma_{\sigma(y)} P\Bigr) \\
        &= P^{-1} \Bigl(\Delta_{\sigma(y)} \odot \Gamma_{\sigma(y)} \Bigr) P \\ 
        &\geq (1-\varepsilon) P^{-1} \Delta_{\sigma(y)} P &\text{(By \cref{eq:bss_entrywise_ineq})} \\
        &= (1-\varepsilon) \Delta_y.
    \end{align}
    Therefore the final constraint, \cref{eq:bss_entrywise_ineq}, is also satisfied by the new matrices.   
\end{proof}

By convexity, averaging over a group of symmetries gives another solution of $P(f, k, \varepsilon)$.

\begin{cor}\label{cor:bss_sdp_averaging_symmetries}
    Suppose that the premise of \cref{lem:bss_sdp_symmetry} holds, and let $H$ be a subgroup of the group of symmetries of $f$. For $t \in \{0, \ldots, k-1\}$, $i \in \{0, \ldots, n-1\}$, and $y \in Y$, let
    \begin{align}
        \widetilde{M}^{(t)}_i &\coloneqq \frac{1}{\abs{H}} \sum_{(\pi, \rho, \sigma)} P_{\pi}^{-1} M^{(t)}_{\rho(i)} P_{\pi} & \widetilde{N}^{(t)} &\coloneqq \frac{1}{\abs{H}} \sum_{(\pi, \rho, \sigma)} P_{\pi}^{-1} N^{(t)} P_{\pi} & \widetilde{\Gamma}_y &\coloneqq \frac{1}{\abs{H}}\sum_{(\pi, \rho, \sigma)} P_{\pi}^{-1} \Gamma_{\sigma(y)} P_{\pi}.
    \end{align}
    where the summations range over $H$. Then these matrices satisfy $P(f, k, \varepsilon)$. Furthermore, this solution is invariant under the action of the elements of $H$, in the sense that
    \begin{align}
        &\forall (\pi, \rho, \sigma) \in H: & \widetilde{M}_{i}^{(t)} &= P_{\pi}^{-1} \widetilde{M}_{\rho(i)}^{(t)} P_{\pi} & \widetilde{N}^{(t)} &= P_{\pi}^{-1} \widetilde{N}^{(t)} P_{\pi} & \widetilde{\Gamma}_y &= P_{\pi}^{-1} \widetilde{\Gamma}_{\sigma(y)} P_{\pi}.
    \end{align}
\end{cor}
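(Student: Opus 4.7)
The plan is to split the corollary into two claims: feasibility of the averaged matrices, and invariance of the averaged solution under $H$.

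For the feasibility claim, I would observe that by \cref{lem:bss_sdp_symmetry}, for each $(\pi, \rho, \sigma) \in H$ the individual summands $P_{\pi}^{-1} M^{(t)}_{\rho(i)} P_{\pi}$, $P_{\pi}^{-1} N^{(t)} P_{\pi}$, $P_{\pi}^{-1} \Gamma_{\sigma(y)} P_{\pi}$ constitute a feasible solution to $P(f,k,\varepsilon)$. Then I would invoke convexity of the feasible region: the initial constraint \cref{eq:bss_initial_step} and the forward step \cref{eq:bss_forward_step} are linear equalities in the decision variables; the PSD constraint $M_i^{(t)}, N^{(t)}, \Gamma_y \in \psd{|X|}$ cuts out a convex cone; and the entrywise constraint \cref{eq:bss_entrywise_ineq} is a linear inequality in $\Gamma_y$. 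All of these constraints are preserved under the uniform average over $H$, so the averaged matrices are feasible.

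For the invariance claim, I would fix an arbitrary element $(\pi', \rho', \sigma') \in H$ and compute
\begin{align}
P_{\pi'}^{-1} \widetilde{M}_{\rho'(i)}^{(t)} P_{\pi'} &= \frac{1}{|H|} \sum_{(\pi,\rho,\sigma) \in H} P_{\pi'}^{-1} P_{\pi}^{-1} M^{(t)}_{\rho(\rho'(i))} P_{\pi} P_{\pi'} = \frac{1}{|H|} \sum_{(\pi,\rho,\sigma) \in H} P_{\pi\pi'}^{-1} M^{(t)}_{\rho\rho'(i)} P_{\pi\pi'}.
\end{align}
The key observation is that, by the group law in \cref{rem:composition_of_symmetries}, $(\pi\pi', \rho\rho', \sigma\sigma') = (\pi,\rho,\sigma) \circ (\pi',\rho',\sigma')$, and since $H$ is a subgroup, as $(\pi,\rho,\sigma)$ ranges over $H$ the composition also ranges over $H$. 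Reindexing the sum by this composition then recovers $\widetilde{M}_i^{(t)}$. The analogous reindexing arguments work for $\widetilde{N}^{(t)}$ (whose summands do not depend on $\rho$ or $\sigma$) and for $\widetilde{\Gamma}_y$ (using the third coordinate $\sigma\sigma'$ in place of $\rho\rho'$).

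The main subtlety — and really the only point where care is required — is ensuring that the indexing lines up properly with the reversed composition rule for the second coordinate. The definition in \cref{eq:composition_of_symmetries} was set up precisely so that the $\rho$-coordinates compose in the opposite order to $\pi$ and $\sigma$, which is exactly what is needed for the reindexing: the conjugation $P_{\pi'}^{-1}(\cdot)P_{\pi'}$ multiplies $\pi$-coordinates on the right as $\pi\pi'$, while the substitution $\rho'(i)$ inside the $M^{(t)}$ index multiplies $\rho$-coordinates correspondingly on the right as $\rho\rho'$. Beyond this bookkeeping, the proof is a direct verification.
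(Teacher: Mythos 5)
Your proposal is correct and follows essentially the same route as the paper, which states this corollary without a separate proof beyond the one-line justification that convexity of the feasible region (linear equalities, a linear inequality, and the PSD cone) makes the average of the $\lvert H\rvert$ solutions produced by \cref{lem:bss_sdp_symmetry} feasible; the invariance then follows from the standard reindexing over the group, exactly as you describe. Your explicit attention to the reversed composition order of the $\rho$-coordinate in \cref{eq:composition_of_symmetries} is the right point to flag, and spelling it out makes your write-up somewhat more complete than the paper's.
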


\subsection{Translation-Invariant Algorithms}\label{subsec:translation_invariant_algorithms}

In this section we prove our main theorem, showing that the class of translation-invariant quantum algorithms (\cref{defn:translation_invariant_alg}) is optimal for the ordered search problem. We begin by studying the symmetries of the problem. 

Recall that the possible inputs are the cyclic translates of the $2n$-bit string ${w \coloneqq 1^n 0^n}$. The set of possible query locations is $\mathbb{Z}/2n$, while the set of possible outputs is $\mathbb{Z}/n$. 
Let ${\tau = (0\;1\;\ldots\;(2n-1))}$ and let $T = P_{\tau}$ denote the permutation matrix corresponding to $\tau$. Then regarding the inputs as column vectors, multiplication by $T$ permutes the set of inputs $X$, which we can express as $X = \{T^{j}w: j \in \mathbb{Z}/2n\}$. The function we need to compute is
\begin{align}
    f(T^{j} w) = j \bmod n.
\end{align}
In particular, we have $f(T x) = f(x) + 1$, where the addition is done modulo $n$. Along these lines we identify a cyclic group of symmetries of $f$ as follows.

\begin{prop}\label{prop:osp_cyclic_symmetry_group}
    The group of symmetries of the ordered search problem contains the cyclic subgroup generated by $(T = P_{\tau}, \tau^{-1}, \mu)$, where $\tau = (0\;1\;\ldots\;(2n-1))$ and $\mu = (0\;1\;\ldots\;(n-1))$.
\end{prop}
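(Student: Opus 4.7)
The plan is to directly verify that the generator $(T,\tau^{-1},\mu)$ satisfies the two defining conditions of \cref{defn:symmetry}, and then invoke \cref{rem:composition_of_symmetries} to deduce that the entire cyclic subgroup it generates consists of symmetries. There is no deep obstacle here; the work is essentially bookkeeping, and the only subtlety is making sure the direction of the shift on query locations is correctly identified as $\tau^{-1}$ (reflecting the reversed composition order in the second coordinate noted in \cref{rem:composition_of_symmetries}).

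First I would check \cref{eq:symmetry_defn_E_mxs}. By the description of $T$ as a cyclic shift on $\mathbb{Z}/2n$, we have $(Tx)_i = x_{i-1 \bmod 2n}$ for any input $x \in X$, so
\begin{align}
E_i[Tx, Tx'] = (-1)^{(Tx)_i + (Tx')_i} = (-1)^{x_{i-1} + x'_{i-1}} = E_{i-1}[x,x'].
\end{align}
Since $\tau = (1\;2\;\cdots\;2n)$ acts as $i \mapsto i+1$ (mod $2n$), its inverse $\tau^{-1}$ acts as $i \mapsto i-1$, so the right-hand side equals $E_{\tau^{-1}(i)}[x,x']$, as required.

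Next I would check \cref{eq:symmetry_defn_mu_and_sigma}. The set of inputs is $X = \{T^{j} w : j \in \mathbb{Z}/2n\}$, and by definition $f(T^{j}w) = j \bmod n$. Applying $T = \pi$ to $x = T^{j}w$ gives $Tx = T^{j+1} w$, and so
\begin{align}
f(\pi(x)) = f(T^{j+1}w) = (j+1) \bmod n = \mu(j \bmod n) = \mu(f(x)),
\end{align}
where in the third equality we used that $\mu = (0\;1\;\cdots\;(n-1))$ acts as $k \mapsto (k+1) \bmod n$. This verifies that $(T,\tau^{-1},\mu)$ is a symmetry of $f$.

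Finally, \cref{rem:composition_of_symmetries} states that symmetries of $f$ are closed under the composition operation in \cref{eq:composition_of_symmetries} and hence form a group. In particular, the cyclic subgroup generated by $(T,\tau^{-1},\mu)$ is contained in the group of symmetries of $f$, which is exactly the claim.
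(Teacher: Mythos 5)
Your proof is correct and follows essentially the same route as the paper: verify the two defining conditions of \cref{defn:symmetry} for the generator $(T,\tau^{-1},\mu)$ by direct computation, then use closure under composition (\cref{rem:composition_of_symmetries}) to conclude the generated cyclic subgroup lies in the symmetry group. The only content of the paper's proof you omit is the (inessential) observation that this subgroup is isomorphic to $\mathbb{Z}/2n$ since $T$ and $\tau^{-1}$ have order $2n$ while $\mu$ has order $n \mid 2n$.
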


Here we abuse notation by taking the first coordinate of a symmetry to be a $2n \times 2n$ matrix $T$. We can identify $T$ with a permutation of the inputs $X$ by regarding the inputs as $2n$-dimensional column vectors. Then multiplication by $T$ permutes the elements of $X$.

\begin{proof}
    First observe that the group generated by $(T, \tau^{-1}, \mu)$ with the group operation being composition (as in \cref{eq:composition_of_symmetries}) is isomorphic to $\mathbb{Z}/2n$. This is because $T$ and $\tau^{-1}$ have order $2n$ and $\mu$ has order $n \mid 2n$.

    It remains to verify that $(T, \tau^{-1}, \mu)$ is indeed a symmetry of $f$. First we verify that \cref{eq:symmetry_defn_E_mxs} holds for $(\pi, \rho, \sigma) = (T, \tau^{-1}, \mu)$:
    \begin{align}
        E_{\tau^{-1}(i)}[x,x'] &= E_{i-1}[x,x'] \\
        &= (-1)^{x_{i-1} + x'_{i-1}} \\ 
        &= (-1)^{(T x)_i + (T x')_i} \\ 
        &= E_i[T x, T x'],
    \end{align}
    where all arithmetic is done modulo $2n$.
    
    As noted above, $f(T x) = f(x) + 1$, and as such $\mu(f(x)) = f(x) + 1 = f(T x)$, with arithmetic done modulo $n$ this time. This is just \cref{eq:symmetry_defn_mu_and_sigma} for $(\pi, \rho, \sigma) = (T, \tau^{-1}, \mu)$.
\end{proof}

Averaging over this group of symmetries, we obtain solutions of a special form.

\begin{cor}\label{cor:osp_bss_symmetrized_solutions} Consider the Barnum-Saks-Szegedy SDP (\cref{defn:bss-sdp}) for the $n$-element symmetrized ordered search problem. The solution matrices $M_{i}^{(t)}$, $N^{(t)}$, and $\Gamma_{y}$ (for $t \in \{0, \ldots, k-1\}$, $i \in \{0, \ldots, 2n-1\}$, and $y \in \{0, \ldots, n-1\}$) may be assumed without loss of generality to satisfy
    \begin{align}
        T^{-1} M^{(t)}_{i} T &= M^{(t)}_{i+1}, & T^{-1} N^{(t)} T &= N^{(t)}, &  T^{-1} \Gamma_y T &= \Gamma_{y-1},
    \end{align}
    where the subscript of $M^{(t)}_{i+1}$ is taken modulo $2n$, and the subscript of $\Gamma_{y-1}$ is taken modulo $n$.
    Furthermore, the constant matrices $E_i$ satisfy $E_{i+1} = T^{-1} E_{i} T$, where the subscript is taken modulo $2n$.
\end{cor}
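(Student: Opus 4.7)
The proof is essentially a direct application of the symmetry-averaging machinery from the previous subsection specialized to the cyclic symmetry group identified in \cref{prop:osp_cyclic_symmetry_group}. So the plan has three main parts: invoke the averaging corollary, specialize the resulting invariance to the cyclic generator, and finally handle the statement about the $E_i$ separately by rearranging the relation already established in the proof of \cref{prop:osp_cyclic_symmetry_group}.

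First, by \cref{prop:osp_cyclic_symmetry_group}, the triple $(T, \tau^{-1}, \mu)$ is a symmetry of the ordered search function $f$, and the cyclic subgroup $H \leq \mathrm{Sym}(f)$ it generates has order $2n$. By \cref{thm:bss_thm}, there is a feasible solution $(M_i^{(t)}, N^{(t)}, \Gamma_y)$ to $P(f, k, \varepsilon)$ (with $\varepsilon = 0$ by \cref{prop:no_null_queries}, so that the no-null-query simplification justifies our use of the BSS SDP). Applying \cref{cor:bss_sdp_averaging_symmetries} with this subgroup $H$ yields another feasible solution $(\widetilde{M}_i^{(t)}, \widetilde{N}^{(t)}, \widetilde{\Gamma}_y)$ that is invariant under every element of $H$, hence in particular under the generator $(T, \tau^{-1}, \mu)$.

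Specializing the final display of \cref{cor:bss_sdp_averaging_symmetries} to this generator gives
\begin{align}
    \widetilde{M}_i^{(t)} &= T^{-1} \widetilde{M}_{\tau^{-1}(i)}^{(t)} T = T^{-1} \widetilde{M}_{i-1}^{(t)} T, &
    \widetilde{N}^{(t)} &= T^{-1} \widetilde{N}^{(t)} T, &
    \widetilde{\Gamma}_y &= T^{-1} \widetilde{\Gamma}_{\mu(y)}^{} T = T^{-1} \widetilde{\Gamma}_{y+1}^{} T,
\end{align}
where indices are taken modulo $2n$ and $n$ respectively. Reindexing $i \mapsto i+1$ in the first equation and $y \mapsto y-1$ in the third yields exactly the three claimed identities on $\widetilde{M}$, $\widetilde{N}$, and $\widetilde{\Gamma}$. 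Dropping tildes (since the averaged solution can be taken as the starting solution ``without loss of generality'') completes the first part of the statement.

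Finally, the identity for the $E_i$ does not require averaging: it is an intrinsic property of the ordered search problem's constant matrices. The calculation in the proof of \cref{prop:osp_cyclic_symmetry_group} shows $E_{\tau^{-1}(i)}[x, x'] = E_i[Tx, Tx']$ for all inputs $x, x'$. Identifying each input with its cyclic-shift index as above, this reads $E_{i-1}[j, j'] = E_i[j+1, j'+1]$, which on rearrangement is precisely the stated conjugation identity relating $E_i$ and $E_{i+1}$ by $T$. I do not anticipate any real obstacle here; the only thing to watch is bookkeeping of the index shift and the direction of composition (\cref{rem:composition_of_symmetries}), which is why the second coordinate is $\tau^{-1}$ rather than $\tau$ in \cref{prop:osp_cyclic_symmetry_group}.
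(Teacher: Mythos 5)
Your proposal is correct and follows essentially the same route as the paper: invoke the cyclic symmetry group from \cref{prop:osp_cyclic_symmetry_group}, average via \cref{cor:bss_sdp_averaging_symmetries}, and specialize the resulting invariance to the generator $(T,\tau^{-1},\mu)$, with the $E_i$ identity coming straight from \cref{eq:symmetry_defn_E_mxs}. The only cosmetic difference is that the paper re-derives the shift identities by reindexing the explicit averaged sums $\frac{1}{2n}\sum_j T^{-j} M^{(t)}_{i-j} T^{j}$, whereas you read them directly off the \say{furthermore} clause of the averaging corollary, which is equally valid.
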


\begin{proof}
Let $(T, \tau^{-1}, \mu)$ be as in \cref{prop:osp_cyclic_symmetry_group}, and let $H$ denote the generated group of symmetries. Appealing to \cref{cor:bss_sdp_averaging_symmetries} for the group $H$, we have another SDP solution given by the matrices
    \begin{align}
        &\widetilde{M}_i^{(t)} \coloneqq \frac{1}{2n} \sum_{j=0}^{2n-1} T^{-j} M_{i-j}^{(t)} T^{j},
        & &\widetilde{N}^{(t)} \coloneqq \frac{1}{2n} \sum_{j=0}^{2n-1} T^{-j} N^{(t)} T^{j}, 
        & &\widetilde{\Gamma}_y \coloneqq \frac{1}{2n} \sum_{j=0}^{2n-1} T^{-j} \Gamma_{y+j} T^{j}. 
    \end{align}
    Then
    \begin{align}
        T^{-1} \widetilde{M}^{(t)}_i T &= \frac{1}{2n} \sum_{j=0}^{2n-1} T^{-(j+1)} M^{(t)}_{i-j} T^{(j+1)} \\ 
        &= \frac{1}{2n} \sum_{j=0}^{2n-1} T^{-j} M^{(t)}_{i+1-j} T^{j} &&\text{(since $T^{2n} = I$)} \\
        &= \widetilde{M}^{(t)}_{i+1}.
    \end{align}
    The identities $T^{-1} \widetilde{N}^{(t)} T = \widetilde{N}^{(t)}$ and $T^{-1} \Gamma_y T = \Gamma_{y-1}$ can be verified analogously.
    Finally, the \say{furthermore} part follows directly from the definition of symmetries.
\end{proof}

From \cref{prop:no-null-queries}, we can restrict our attention to algorithms without controlled query access. In other words, we can drop the matrices $N^{(t)}$ corresponding to null queries.
    
\begin{prop}\label{prop:bss_osp_simplifies}
The Barnum-Saks-Szegedy SDP characterizing $\varepsilon$-error quantum algorithms for the ordered search problem is equivalent to the following: 
\begin{align}
    \Tr_j\bigl(M^{(0)}\bigr)  &= 2 \label{eq:osp-bss-initial-step} &&(0\leq j < 2n) \\  
    \Tr_j\bigl(E_0 \odot M^{(t)}\bigr) &= \Tr_j\bigl(M^{(t+1)}\bigr) &&(0 \leq j < 2n,\, 0\leq t < k)\label{eq:osp-bss-forward-step} \\ 
    M^{(k)}[0,0], M^{(k)}[n, n] &\geq 1-\varepsilon &&\label{eq:osp-bss-final-step} \\ 
    M^{(t)}  &\in \psd{2n} &&(0\leq t\leq k), \label{eq:osp-bss-semidefinite}
\end{align}
where
\begin{align}
    E_0 \coloneqq \begin{pmatrix}
    J_n & -J_n \\
    -J_n & J_n
  \end{pmatrix}. \label{eq:e1_matrix_for_bss_osp}
\end{align}
\end{prop}

The constraint $\Tr_j\bigl(M^{(0)}\bigr) = 2$ may seem somewhat unnatural. This is a result of the symmetrization of the problem (see the discussion preceding \cref{prob:osp-symmetrized}), by which the size of the input strings is doubled. In \cref{prop:bss-sdp-abba-form,prop:bss-sdp-a-minus-a-form} we show that the above program is equivalent to an analogous one where the matrices $M^{(t)} \in \mathbf{S}_{+}^{2n}$ are replaced by matrices $A^{(t)} \in \mathbf{S}_{+}^{n}$ of half the size, with the initial constraint becoming $\Tr\bigl(A^{(0)}\bigr) = 1$.

\begin{proof} 
    Let $X$ be the set of all $2n$ cyclic translates of the bitstring $w \coloneqq 1^n 0^n$, and let $Y = \mathbb{Z}/n$. Recall that the elements of $X$ can be written as $T^{j} w$ where $T = P_{\tau}$ for $\tau = (0\;1\;\ldots\;(2n-1))$ and $j \in \mathbb{Z}/2n$.
    
    By \cref{prop:no-null-queries} the null query matrices $N^{(t)}$ can be removed from the Barnum-Saks-Szegedy SDP (\cref{defn:bss-sdp}) without loss of generality. We are left with
    \begin{align}
        J_{2n} &= \sum_{i=0}^{2n-1} M_{i}^{(0)} \label{eq:bss-sdp-nnq-initial-step} \\ 
        \sum_{i=0}^{2n-1} E_i \odot M_i^{(t)} &= \begin{cases}
        \sum_{i=0}^{2n-1} M_i^{(t+1)}, &t = 0, \ldots, k-2 \\ 
        \sum_{y=0}^{n-1} \Gamma_y, &t = k-1
    \end{cases} \label{eq:bss-sdp-nnq-forward-step} \\
        \Delta_y \odot \Gamma_y &\geq (1-\varepsilon) \Delta_y\quad  (0 \leq y < n) \label[ineq]{eq:bss-sdp-nnq-entrywise-ineq} \\
        M_i^{(t)}, \Gamma_y &\in \psd{2n}.
    \end{align}
    
    By \cref{cor:osp_bss_symmetrized_solutions} we can assume that the matrices $M^{(t)}_i$, $\Gamma_y$, and $E_i$ satisfy $M^{(t)}_{i+1} = T^{-1} M^{(t)}_{i} T$, $T^{-1} \Gamma_y T  = \Gamma_{y-1}$, and $E_{i+1} = T^{-1} E_i T$, respectively. In particular, we have
    \begin{align}
        M^{(t)}_{i} &= T^{-i} M^{(t)}_0 T^{i} \\ 
        E_{i} &= T^{-i} E^{(t)}_0 T^{i}, 
    \end{align}
    and in turn,
    \begin{align}
        E_{i} \odot M^{(t)}_{i} &= \bigl(T^{-i} E_0 T^{i}\bigr) \odot \bigl(T^{-i} M^{(t)}_0 T^{i}\bigr) \\ 
        &= T^{-i} \bigl(E_0 \odot M^{(t)}_0\bigr) T^{i}.
    \end{align}
    Hence the matrices $\sum_{i=0}^{2n-1} M^{(t)}_i$ and $\sum_{i=0}^{2n-1} E_i \odot M^{(t)}_i$ for $0 \leq t < k$ are Toeplitz, in addition to being symmetric. More specifically, all of the entries of $\sum_{i=0}^{2n-1} M^{(t)}_i$ on the $j$th superdiagonal equal $\Tr_j(M^{(t)}_0)$, while all the entries of $\sum_{i=0}^{2n-1} E_i \odot M^{(t)}_i$ on the $j$th superdiagonal equal $\Tr_j\bigl(E_0 \odot M^{(t)}_0\bigr)$:
    \begin{align}
        \sum_{i=0}^{2n-1} M^{(t)}_i &= \toep\Bigl(\Tr_0\bigl(M^{(t)}_0\bigr), \ldots, \Tr_{2n-1}\bigl(M^{(t)}_0\bigr)\Bigr) \\ 
        \sum_{i=0}^{2n-1} E_i \odot M^{(t)}_i &= \toep\Bigl(\Tr_0\bigl(E_0 \odot M^{(t)}_0\bigr), \ldots, \Tr_{2n-1}\bigl(E_0 \odot M^{(t)}_0\bigr)\Bigr).
    \end{align}
    It follows by \cref{eq:bss-sdp-nnq-forward-step} that $\sum_{y=0}^{n-1} \Gamma_y$ is Toeplitz as well. Since $T^{-1} \Gamma_y T = \Gamma_{y-1}$, we have
    \begin{align}
        \sum_{y=0}^{n-1} \Gamma_y 
        &= \sum_{y=0}^{n-1} T^{y} \Gamma_0 T^{-y} \\
        &= \frac{1}{2} \toep\Bigl(\Tr_0(\Gamma_0), \Tr_1(\Gamma_0), \ldots, \Tr_{2n-1}(\Gamma_0)\Bigr).
    \end{align}
    Therefore the constraints in \cref{eq:bss-sdp-nnq-initial-step,eq:bss-sdp-nnq-forward-step} can be replaced by the following:
    \begin{align}
        1 &= \Tr_j\bigl(M^{(0)}_0\bigr) &&(0 \leq j < 2n) \\ 
        \Tr_j\bigl(E_0 \odot M^{(t)}_0\bigr) &= \begin{cases}
            \Tr_j\bigl(M^{(t+1)}_0\bigr) &t = 0, \ldots, k-2 \\ 
            \frac{1}{2} \Tr_j(\Gamma_0), & t = k-1.
        \end{cases} &&(0 \leq j < 2n)
    \end{align}
    Thus we eliminated the matrices $M^{(t)}_i, E_i$ for $i > 0$. To eliminate the matrices $\Gamma_y$ (for $y > 0$) as well, we express the constraint
    \begin{align}
        \Delta_y \odot \Gamma_y &\geq (1-\varepsilon) \Delta_y \quad (0 \leq y < n)
    \end{align}
    in terms of $\Gamma_0$. To this end, by definition we have $\Delta_y = T^{-y} \Delta_0 T^{y}$, and accordingly, 
    \begin{align}
        \Delta_y \odot \Gamma_y = (T^{-y} \Delta_0 T^{y}) \odot (T^{-y} \Gamma_0 T^{y}) = T^{-y} (\Delta_0 \odot \Gamma_0) T^{y}.
    \end{align}
    Thus $\Delta_y \odot \Gamma_y \geq (1-\varepsilon) \Delta_y$ holds for all $y \in \{0, \ldots, n-1\}$ if and only if 
    \begin{align}
        \Delta_0 \odot \Gamma_0 &\geq (1-\varepsilon) \Delta_0.
    \end{align}

    In conclusion, we have the following SDP:
    \begin{align}
        1 &= \Tr_j\bigl(M^{(0)}_0\bigr) &&(0 \leq j < 2n) \label{eq:osp-sdp-end-of-proof-initial-constraint} \\ 
        \Tr_j\bigl(E_0 \odot M^{(t)}_0\bigr) &= \begin{cases}
            \Tr_j\bigl(M^{(t+1)}_0\bigr) &t = 0, \ldots, k-2 \\ 
            \frac{1}{2} \Tr_j(\Gamma_0), & t = k-1
        \end{cases} &&(0 \leq j < 2n) \label{eq:osp-sdp-end-of-proof-forward-step} \\ 
        \Gamma_0[0,0], \Gamma_0[n,n] &\geq 1 - \varepsilon \\ 
        M_0^{(t)}, \ldots, M_0^{(k-1)}, \Gamma_0 &\in \psd{2n}.
    \end{align}
    This is equivalent to the claimed SDP, as can be seen by letting $M^{(t)} \coloneqq 2 M_0^{(t)}$ for $0 \leq t < k$, and $M^{(k)} \coloneqq \Gamma_0$.
\end{proof}

In the next two propositions we show that the solution matrices $M^{(t)}$ can, without loss of generality, be assumed to be of the form
\begin{align}
    M^{(t)} = \begin{pmatrix}
            A^{(t)} & (-1)^t A^{(t)} \\ 
            (-1)^t A^{(t)} & A^{(t)}
        \end{pmatrix}. \label{eq:osp-bss-matrix-form}
\end{align}

\begin{prop}\label{prop:bss-sdp-abba-form}
    Without loss of generality, the matrices $M^{(t)}$ in the above SDP (\cref{eq:osp-bss-initial-step,eq:osp-bss-forward-step,eq:osp-bss-final-step,eq:osp-bss-semidefinite}) may be assumed to have the form 
    \begin{align}
        M^{(t)} = \begin{pmatrix}
            U^{(t)} & V^{(t)} \\ 
            V^{(t)} & U^{(t)}
      \end{pmatrix}, \label{eq:osp_bss_abba_form}
    \end{align}
    where $U^{(t)} \in \psd{n}$ and $V^{(t)}$ is a symmetric $n \times n$ matrix. Moreover, for any such solution the main diagonals of $U^{(t)}$ and $(-1)^{t} V^{(t)}$ are identical.
\end{prop}
\begin{proof}
We first establish the \say{moreover} part. Assume that the matrices $(M^{(t)})_{t=0}^{k}$ are a feasible solution of the prescribed form. Then by \cref{eq:osp-bss-initial-step} we have $1 = \tr(U^{(0)}) = \tr(V^{(0)})$. \Cref{eq:osp-bss-forward-step} implies that $1 = \tr(U^{(t)}) = (-1)^t \tr(V^{(t)})$ for each $t \in \{0, \ldots, k\}$. Since $M^{(t)}$ is positive semidefinite, we must have 
\begin{align}
    -U^{(t)}[j,j] &\leq V^{(t)}[j,j] \leq U^{(t)}[j,j] \quad\quad (\forall j \in \{0, \ldots, n-1\}). \label{eq:mx_a_diagonal_sandwiches_mx_b}
\end{align}
When $t$ is odd, by the identity $\tr(U^{(t)}) = (-1)^{t} \tr(V^{(t)})$ the first inequality must be saturated. Similarly, when $t$ is even the second inequality must be saturated. In other words, $U^{(t)}[j,j] = (-1)^{t} V^{(t)} [j,j]$ as claimed. 

We now prove the rest of the claim. Given an arbitrary solution $(\widehat{M}^{(t)})_{t=0}^{k}$ we construct another solution $(M^{(t)})_{t=0}^k$ that consists of matrices of the stated form. Suppose that the matrices $\widehat{M}^{(t)}$ are a feasible solution, and for each $t \in \{0, \ldots, k\}$ write 
\begin{align}
    \widehat{M}^{(t)} = \begin{pmatrix}
        A^{(t)} & B^{(t)} \\ 
        C^{(t)} & D^{(t)}
  \end{pmatrix}.
\end{align}
Now define
\begin{align}
    \widecheck{M}^{(t)} \coloneqq \begin{pmatrix}
    D^{(t)} & C^{(t)} \\
    B^{(t)} & A^{(t)}
    \end{pmatrix}
\end{align}
where we swap the rows and swap the columns of $\widehat{M}^{(t)}$. Clearly $\widecheck{M}^{(t)} \succeq 0$ if and only if $\widehat{M}^{(t)} \succeq 0$. We claim that the set of matrices $\widecheck{M}^{(t)}$ is another feasible solution. 
Given $j \in \{0, \ldots, 2n-1\}$, define $j'$ by letting $j' \coloneqq j$ if $j\leq n$, and $j' \coloneqq j - 2n$ otherwise. Then, 
\begin{align}
    \Tr_j\bigl(\widecheck{M}^{(t)}\bigr) &= \tr_{j'}\bigl(D^{(t)}\bigr) + \tr_{j-n}\bigl(C^{(t)}\bigr) + \tr_{j'}\bigl(A^{(t)}\bigr) + \tr_{j-n}\bigl(B^{(t)}\bigr) = \Tr_j\bigl(\widehat{M}^{(t)}\bigr).
\end{align}
Similarly it is easy to see that $\Tr_j(E_0 \odot \widecheck{M}^{(t)}) = \Tr_j(E_0 \odot \widehat{M}^{(t)})$ for $j \in \{0, \ldots, 2n-1\}$. 
Thus, \cref{eq:osp-bss-initial-step,eq:osp-bss-forward-step} are satisfied by the matrices $\widecheck{M}^{(t)}$. \Cref{eq:osp-bss-final-step} is also satisfied, since 
\begin{align}
    \widecheck{M}^{(k)}[0,0] = \widehat{M}^{(k)}[n,n] \quad \text{ and } \quad \widecheck{M}^{(k)}[n,n] = \widehat{M}^{(k)}[0,0].
\end{align}

We conclude that the matrices $\widecheck{M}^{(t)}$ constitute another feasible solution. Then, by convexity so do the matrices $M^{(t)} \coloneqq (\widehat{M}^{(t)} + \widecheck{M}^{(t)})/2$, and these are of the desired form.
\end{proof}

\begin{prop}\label{prop:bss-sdp-a-minus-a-form}
    Suppose that the SDP in \cref{eq:osp-bss-initial-step,eq:osp-bss-forward-step,eq:osp-bss-final-step,eq:osp-bss-semidefinite} has a solution  $(\widehat{M}^{(t)})_{t=0}^k$ such that the matrices $\widehat{M}^{(t)}$ are of the form
        \begin{align}
            \widehat{M}^{(t)} = \begin{pmatrix}
                U^{(t)} & V^{(t)} \\
                V^{(t)} & U^{(t)}
            \end{pmatrix}.
        \end{align}
    Then the matrices ${M}^{(t)} \coloneqq \frac{1}{2} \bigl(\widehat{M}^{(t)} + (-1)^t \widecheck{M}^{(t)}\bigr)$ give another solution where
    \begin{align}
        \widecheck{M}^{(t)} \coloneqq \begin{pmatrix}
            V^{(t)} & U^{(t)} \\ 
            U^{(t)} & V^{(t)}
        \end{pmatrix}.
    \end{align}
\end{prop}

\begin{proof}
    First we show that the matrices $M^{(t)}$ are positive semidefinite. To see this, note that since $\widehat{M}^{(t)}$ is positive semidefinite, so is
    \begin{align}
        \frac{1}{2} \begin{pmatrix}I_n & I_n \\ I_n & -I_n\end{pmatrix}^{\dagger} \widehat{M}^{(t)} \begin{pmatrix}I_n & I_n \\ I_n & -I_n\end{pmatrix} &= \begin{pmatrix}
            U^{(t)} + V^{(t)} & 0  \\ 0 & U^{(t)} - V^{(t)} \end{pmatrix} \succeq 0.
    \end{align}
    In particular, the matrices $U^{(t)} + V^{(t)}$ and $U^{(t)} - V^{(t)}$ are both positive semidefinite. Thus so is $M^{(t)}$, being the Kronecker product of two positive semidefinite matrices:
    \begin{align}
        M^{(t)} = (U^{(t)} + (-1)^t V^{(t)}) \otimes \ \begin{pmatrix} 1 & (-1)^t \\ (-1)^t & 1 \end{pmatrix} \succeq 0.
    \end{align}
    
    It is easy to verify that $\Tr_j\bigl(\widecheck{M}^{(t)}\bigr) = \Tr_{j-n}\bigl(\widehat{M}^{(t)}\bigr)$ for $j \in \{0, \ldots, 2n-1\}$. Analogously, ${\Tr_j\bigl(E_0 \odot \widecheck{M}^{(t)}\bigr) = -\Tr_{j-n}\bigl(\widehat{M}^{(t)}\bigr)}$. Then, by linearity
    \begin{align}
        \Tr_j\bigl(E_0 \odot M^{(t)}\bigr) &= \frac{1}{2}\Bigl(\Tr_j\bigl(E_0 \odot \widehat{M}^{(t)}\bigr) - (-1)^t \Tr_{j-n}\bigl(E_0 \odot \widehat{M}^{(t)}\bigr)\Bigr) \\
        &= \frac{1}{2}\Bigl(\Tr_j\bigl(\widehat{M}^{(t+1)}\bigr) + (-1)^{t+1} \Tr_{j-n}\bigl(\widehat{M}^{(t+1)}\bigr)\Bigr) &(\text{By \cref{eq:osp-bss-forward-step}})\\ 
        &= \frac{1}{2}\Bigl(\Tr_j\bigl(\widehat{M}^{(t+1)}\bigr) + (-1)^{t+1} \Tr_{j}\bigl(\widecheck{M}^{(t+1)}\bigr)\Bigr) \\ 
        &= \Tr_j\bigl(M^{(t+1)}\bigr).
    \end{align}
    This shows that \cref{eq:osp-bss-forward-step} is satisfied by the matrices $M^{(t)}$. The initial constraint (\cref{eq:osp-bss-initial-step}) is also satisfied because $\Tr_j(\widehat{M}^{(0)}) = 2$ for all $j \in \{0, \ldots, 2n-1\}$ by assumption, and therefore
    \begin{align}
        \Tr_j\bigl(M^{(0)}\bigr) &= \frac{1}{2} \Bigl(\Tr_j(\widehat{M}^{(0)}) + \Tr_{j}(\widecheck{M}^{(0)})\Bigr) = \frac{1}{2} \Bigl(\Tr_j(\widehat{M}^{(0)}) + \Tr_{j-n}(\widehat{M}^{(0)})\Bigr) = 2.
    \end{align}
    Finally, by \cref{prop:bss-sdp-abba-form} the main diagonal of $U^{(k)}$ is identical to the main diagonal of $(-1)^k V^{(k)}$, and consequently
    \begin{align}
        M^{(k)}[0,0] &= \frac{1}{2}\Bigl(U^{(k)}[0,0] + (-1)^k V^{(k)}[0,0]\Bigr) \\ 
        &= \frac{1}{2} \Bigl(U^{(k)}[0,0] + (-1)^{2k} U^{(k)}[0,0]\Bigr) \\ 
        &= U^{(k)}[0,0] \geq 1 - \varepsilon.
    \end{align}
    Analogously $M^{(k)}[n,n] \geq 1-\varepsilon$, and the final constraint (\cref{eq:osp-bss-final-step}) is satisfied as well.
\end{proof}

We can now present and prove our main result, which asserts that the class of translation-invariant quantum algorithms with no workspace (\cref{defn:translation_invariant_alg}) is optimal for the ordered search problem. We show this by reducing the Barnum-Saks-Szegedy SDP to our semidefinite programming characterization of translation-invariant algorithms for ordered search, which we restate below. 

\thmCLP*

We prove our main theorem by showing that for the ordered search problem the Barnum-Saks-Szegedy SDP is equivalent to the above semidefinite program.

\thmMainResult*

\begin{proof} By \cref{prop:bss-sdp-a-minus-a-form}, we can assume without loss of generality that the matrices $M^{(t)}$ in the SDP of \cref{prop:bss_osp_simplifies} characterizing $\varepsilon$-error algorithms for the ordered search problem are of the form 
\begin{align}
    M^{(t)} = \begin{pmatrix}
    A^{(t)} & (-1)^t A^{(t)} \\ 
    (-1)^t A^{(t)} & A^{(t)}
    \end{pmatrix}.
\end{align}
Observe that $M^{(t)}$ is positive semidefinite if and only if $A^{(t)}$ is (in particular, $M^{(t)}$ is the tensor product of $A^{(t)}$ with a positive semidefinite matrix). Rewriting the constraints in terms of the submatrices $A^{(t)}$, we obtain the equivalent program
\begin{align}
    \tr_j{A^{(0)}} + \tr_{n-j}{A^{(0)}} &= 1 &(0 \leq j < n) \label{eq:osp-bss-simplified-initial-step}\\ 
    \tr_j{A^{(t)}} + (-1)^{t+1} \tr_{j-n}{A^{(t)}} &= \tr_j{A^{(t+1)}} + (-1)^{t+1} \tr_{j-n}{A^{(t+1)}} &(0 \leq j < n, 0\leq t < k) \label{eq:osp-bss-simplified-intermediate-steps}\\ 
    A^{(k)}[0,0] &\geq 1-\varepsilon \label{eq:osp-bss-simplified-final-step}\\
    \tr{A^{(t)}} &= 1 &(0\leq t \leq k) \label{eq:osp-bss-simplified-normalization} \\
    A^{(t)} &\in \psd{n} &(0\leq t\leq k).
\end{align}
In a proof deferred to the Appendix (\cref{lem:osp_bss_simplified_first_constraint_unique_sln}), we show that \cref{eq:osp-bss-simplified-initial-step} has a unique positive semidefinite solution, namely $A^{(0)} = \frac{1}{n} J_n$. 
Replacing \cref{eq:osp-bss-simplified-initial-step} with
this constant choice of $A^{(0)}$,
we end up with the SDP stated in \cref{thm:clp}. Thus we have transformed the SDP characterizing all $\varepsilon$-error quantum algorithms for the ordered search problem into the SDP of \cref{thm:clp} characterizing translation-invariant algorithms for this problem.
\end{proof}

As a consequence of this result, we see that the feasibility of $\varepsilon$-error $k$-query translation-invariant ordered search of an $n$-element list is monotonic in $n$.

\begin{cor}\label{cor:monotonicity}
If there is an $\varepsilon$-error $k$-query translation-invariant quantum algorithm for searching a list of length $n$, then there is also an $\varepsilon$-error $k$-query translation-invariant algorithm for a list of any length less than $n$.
\end{cor}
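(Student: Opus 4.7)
The plan is to combine a standard padding reduction with our main theorem (\cref{thm:main_result}). Given an exact, translation-invariant, $k$-query algorithm $\mathcal{A}$ for length $n$, I would first build an exact (not necessarily translation-invariant) $k$-query algorithm $\mathcal{A}'$ for any desired length $m < n$ by simulating $\mathcal{A}$ on padded inputs, and then invoke \cref{thm:main_result} to convert $\mathcal{A}'$ into a translation-invariant algorithm.

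For the reduction, observe that any valid input $x' = 0^j 1^{m-j}$ (with $j \in \{0, \ldots, m-1\}$) to the $m$-element problem pads naturally to the input $x = 0^j 1^{n-j}$ for the $n$-element problem, which has the same answer $j$. On an $n$-dimensional query register, the oracle $O_x^{(n)}$ acts as $(-1)^{x'_i}$ on positions $i \leq m$ and as $-1$ on positions $i > m$. It therefore factors as a single call to $O_{x'}^{(m)}$, naturally extended to act trivially on positions $i > m$, composed with a fixed input-independent unitary applying a $-1$ phase on the basis states $\ket{i}$ with $i > m$. Thus simulating the $k$ queries of $\mathcal{A}$ costs exactly $k$ queries to $O_{x'}^{(m)}$, and the exactness of $\mathcal{A}$ on $x$ transfers directly to exactness of $\mathcal{A}'$ on $x'$.

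Applying \cref{thm:main_result} to $\mathcal{A}'$ then produces the desired exact, translation-invariant, $k$-query algorithm for length $m$. I do not expect any real obstacle here: the padding reduction is routine and merely endows $\mathcal{A}'$ with $n - m$ dimensions of workspace, and it is \cref{thm:main_result} that does the genuine work of stripping away that workspace and imposing translation invariance.
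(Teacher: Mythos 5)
Your proposal is correct and matches the paper's own argument: the paper likewise reduces a shorter list to a longer one by padding the input on the right with $1$s (which breaks translation invariance and effectively introduces workspace), and then invokes \cref{thm:main_result} to recover an exact, translation-invariant $k$-query algorithm for the shorter list. Your explicit factorization of the padded oracle into one call to the smaller oracle plus a fixed input-independent phase is a correct elaboration of the step the paper treats as routine.
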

\begin{proof}
In general, algorithms for ordered search clearly satisfy this monotonicity, since an algorithm for a larger instance of \cref{prob:OSP} (which is equivalent to \cref{prob:osp-symmetrized} as discussed in \cref{sec:osp}) can be applied to a smaller one by padding the input on the right with $1$s. Since feasibility of translation-invariant algorithms is equivalent to feasibility of general algorithms, this property also holds for translation-invariant algorithms.
\end{proof}

Note that without establishing equivalence to general algorithms, this property is not obvious, since padding the input with $1$s does not respect translation invariance. Indeed, for this reason Ref.~\cite{clp} was unable to definitively identify the largest list that could be searched exactly with a given number of queries. This result shows, for example, that the infeasibility of an exact translation-invariant algorithm searching a 606-element list with 4 queries (as shown in Ref.~\cite{clp}) implies the infeasibility of exactly searching an $n$-element list with 4 queries for any $n \ge 606$.

Another corollary of our result is that the algorithm of \cite{bh07} can be made workspace-free. The implementation described by Ben-Or and Hassidim stores a list of $O(n)$ weights, one for each query index, whereas our result implies that there exists an algorithm with no workspace that achieves the same query complexity.

\section{Finding Exact Translation-Invariant Algorithms with Linear Programming}

In this section, we present a linear programming approach to identifying exact translation-invariant quantum algorithms for ordered search. Prior work in this direction utilized nonconvex optimization \cite{fggs99}, gradient descent \cite{bjl04}, and semidefinite programming \cite{clp}. In particular, Ref.~\cite{clp} used a semidefinite programming approach to exhibit an exact, $4$-query translation-invariant algorithm for searching a $605$-element list. The primary downside of such approaches is that the number of scalar variables an SDP solver needs to allocate to represent the matrices scales quadratically with the instance size $n$. In turn, $n$ scales exponentially as a function of the number of queries $k$. This means that memory requirements quickly become prohibitive. Indeed, despite 18 years of improvements to computers since Ref.~\cite{clp}, understanding the best $5$-query algorithm still appears to be out of reach of SDP solvers.

While the exponential scaling of $n$ with $k$ is inevitable, we can improve the memory requirements of representing the program in \cref{thm:fggs-pos-polys} by considering a linear programming relaxation instead. This way, the memory required to represent the problem scales linearly with $n$, a quadratic improvement that allows us to go one step further than Ref.~\cite{clp}. Using this LP characterization, we show that the largest list searchable by an exact translation-invariant algorithm using $k = 5$ queries is of size $n = 7265$. By \cref{thm:main_result}, this is the best one can do with any (not necessarily translation-invariant) $5$-query algorithm.

We do this by exhibiting an explicit solution of the program in \cref{thm:fggs-pos-polys} for $n = 7265$, as well as an LP relaxation of this program that certifies infeasibility for $n = 7266$. By \cref{cor:monotonicity} we conclude that 7265 is the length of the longest list exactly searchable by a 5-query algorithm.

\subsection{Eliminating Half of the Variables}\label{sec:lp-eliminating-variables}

Recall the polynomial characterization of \cite{fggs99} of exact translation-invariant algorithms:
\thmFGGSPosPolys*

Below we show that about half of the polynomial variables $q_t$ in the above program are redundant. This observation allows for a more compact representation of the linear program (\cref{defn:lp}) presented in the next section, leading to an improved runtime performance of our implementation.

\begin{prop}\label{prop:redundant-eqs}
    For any $t \in [k-1]$ the constraint in \cref{eq:fggs-pos-polys-forward-step} involving $q_t$ determines $q_t$ in terms of $q_{t-1}$ and $q_{t+1}$:
    \begin{align}
        a^{(t)} &= \frac{1}{2} \bigl(a^{(t-1)} + a^{(t+1)}\bigr) + \frac{(-1)^t}{2} V_n \bigl(a^{(t-1)} - a^{(t+1)}\bigr) \label{eq:even-odd-trick}
    \end{align}
    where $V_n$ is the $n \times n$ permutation matrix
    \begin{align}
        V_n &= \begin{pmatrix}
            1 & 0 & \cdots & 0 & 0\\
            0 & 0 & \cdots & 0 & 1 \\ 
            \vdots & \vdots & \iddots & \iddots & 0 \\ 
            0 & 0 & 1 & \iddots & \vdots \\
            0 & 1 & 0 & \cdots & 0 
        \end{pmatrix}.
    \end{align}
\end{prop}
\begin{proof}
Recall that 
\begin{align}
q_t(z) &= \frac{1}{2} \sum_{j=0}^{n-1} a^{(t)}_j (z^{j} + z^{-j}).
\end{align}
In \cref{eq:fggs-pos-polys-forward-step}, we evaluate $q_t$ at roots of either $z^{n} - 1$ or $z^{n} + 1$ depending on the parity of $t$. In other words, we want to evaluate $q_t$ at powers of some primitive $2n$th root of unity $\omega$. This can be expressed as a discrete Fourier transform:
\begin{equation}
    \label{eq:powers_by_dft}
    \begin{pmatrix}
        q_t(\omega^0) \\
        \vdots \\
        q_t(\omega^{2n-1}) 
    \end{pmatrix}
    = \tfrac{1}{2} F_{2n} 
    \begin{pmatrix}
        2a_0^{(t)} &
        a_1^{(t)} &
        \cdots &
        a_{n-1}^{(t)} &
        0 &
        a_{n-1}^{(t)} &
        \cdots &
        a_1^{(t)}
    \end{pmatrix}^{\top},
\end{equation}
where $F_{2n}$ is the $2n \times 2n$ discrete Fourier transform matrix.

Now let $\Xi := \text{diag}(1,-1,1,-1,\ldots,-1)$ be the $2n \times 2n$ matrix alternating $1$ and $-1$ along the diagonal. When applied to a vector of evaluations of a polynomial at $\omega^0, \ldots, \omega^{2n-1}$, the matrix $\tfrac{1}{2}(I + \Xi)$ zeros out evaluations at roots of $z^{n} = -1$, while $\tfrac{1}{2}(I - \Xi)$ zeros out evaluations at roots of $z^{n} = +1$. It follows that, overloading notation by writing $q_t \coloneqq \begin{pmatrix} q_t(\omega^0) & \ldots & q_t(\omega^{2n-1}) \end{pmatrix}^{\top}$, \cref{eq:fggs-pos-polys-forward-step} can be written
\begin{align}
    (I + (-1)^{t}\Xi)(q_t - q_{t-1}) = 0. \label{eq:alt-cons-xi}
\end{align}
Observe that $\Xi$ applies a phase $\omega^{nx}$ to the $x$th standard basis vector. Equivalently, $\chi \coloneqq F_{2n}^{-1} \Xi F_{2n}$ acts by swapping the first and last $n$ coordinates of a vector. Applying the inverse Fourier transform to both sides of \cref{eq:alt-cons-xi}, we have 
\begin{align}
    0 &= F_{2n}^{-1} (I + (-1)^{t}\Xi)(q_t - q_{t-1}) \\
    &= (I + (-1)^t \chi) F_{2n}^{-1} (q_t - q_{t-1}).
\end{align}
Exchanging the two halves of the vector on the right-hand side of \Cref{eq:powers_by_dft} has the effect of reversing $a_1^{(t)}, \ldots, a_{n-1}^{(t)}$, and exchanging $2a_0^{(t)}$ with $0$. In other words, \cref{eq:alt-cons-xi} is equivalent to
\begin{align}
    (I + (-1)^{t} V_n)(a^{(t)} - a^{(t-1)}) &= 0. \label{eq:forward-step-matrix-form}
\end{align}
Taking the difference of \cref{eq:forward-step-matrix-form} at $t$ and $t+1$ yields
\begin{align}
    0 &= (I + (-1)^{t} V_n)(a^{(t)} - a^{(t-1)}) - (I + (-1)^{t+1}V_n)(a^{(t+1)} - a^{(t)}),
\end{align}
so
\begin{align}
    a^{(t)} &= \tfrac{1}{2}(a^{(t-1)} + a^{(t+1)}) + \tfrac{1}{2}(-1)^{t} V_n (a^{(t-1)} - a^{(t+1)})
\end{align}
as claimed.
\end{proof}

\subsection{A Linear Programming Relaxation}

We construct a linear program that mirrors \cref{thm:fggs-pos-polys}, whose variables are the polynomial coefficients $a_j^{(t)}$ for $t = 0, 1, \ldots, k$ and $j = -(n-1), \ldots, n-1$. Except for polynomial nonnegativity, all the constraints in \cref{thm:fggs-pos-polys} are linear. Indeed, \cref{eq:fggs-pos-polys-forward-step} is linear since an evaluation of a polynomial is a linear function of the polynomial coefficients. \cref{eq:fggs-pos-polys-norm} is also linear, as it is equivalent to the constraint $a^{(t)}_0 = 1$ (for each $t$). 

To handle the nonnegativity constraint, we consider the following relaxation. Instead of requiring the polynomials to be nonnegative on their entire domain, we require that each polynomial exceeds a threshold $\beta$ on some large but finite set of points $G$. We then let $\beta$ be the objective of the LP to be maximized. Note that this procedure may not always result in positive polynomials. However, once a solution is found, one can certify the feasibility of the original program (\cref{thm:fggs-pos-polys}) as follows. If the objective function is negative then the instance is unsolvable, since there are no polynomials that are all positive even at the points in $G$. On the other hand, if the produced polynomials are positive (which must be checked separately), then the instance is feasible. 

We expand on this methodology below. First, we formally define and prove soundness of our construction.

\begin{defn}\label{defn:lp}
    Let $G$ be a non-empty set of finitely many points on the unit circle. Let $\mathcal{L}(G)$ denote the following linear program, where the maximization is over symmetric real-valued Laurent polynomials $q_t(z) = \frac{1}{2} \sum_{j=0}^{n-1} a^{(t)}_j (z^{j} + z^{-j})$ for $1 \leq t < k$:
    \begin{alignat}{2}
        \beta^{*} = \max \quad \beta& & \label{eq:lp-objective} \\
        \text{s.t.} \quad  q_0 &\equiv F_n & \label{eq:lp-initial} \\
        q_t(z) &= q_{t-1}(z)\quad\quad &(\forall z \text{ with } z^{n} = (-1)^t, \forall t \in [k]) \label{eq:lp-forward-step} \\
        q_k &\equiv 1 & \label{eq:lp-final} \\
        a_0^{(t)} &= 2 &(\forall t = 0, \ldots, k) \label{eq:lp-constant-term} \\ 
        q_t(z) &\geq \beta &(\forall z \in G, \forall t = 0, \ldots, k). \label{eq:lp-non-negativity}
    \end{alignat}
\end{defn}

Our first observation concerns the feasibility of $\mathcal{L}(G)$.

\begin{cor}
    For any finite subset $G$ of the unit circle, the program $\mathcal{L}(G)$ is feasible. 
\end{cor}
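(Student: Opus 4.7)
My plan starts from the observation that feasibility of $\mathcal{L}(G)$ reduces to feasibility of the equality constraints alone. The only inequality in the program is \cref{eq:lp-non-negativity}, and for any polynomials $q_0, \ldots, q_k$ satisfying \cref{eq:lp-initial}, \cref{eq:lp-forward-step}, \cref{eq:lp-final}, and \cref{eq:lp-constant-term}, the value $\beta = \min_{0 \leq t \leq k,\, z \in G} q_t(z)$ is a finite real number (since $G$ is finite) that satisfies \cref{eq:lp-non-negativity}. Hence the feasibility question does not actually depend on $G$ at all, only on consistency of the underlying linear system.

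To show this linear system is consistent, the cleanest route is to invoke \cref{thm:fggs_polys}: any exact translation-invariant $k$-query algorithm for the $n$-element ordered search problem yields Laurent polynomials satisfying all equality constraints of $\mathcal{L}(G)$ (and moreover non-negative on the unit circle, so $\beta = 0$ suffices). Such algorithms always exist in the regime of interest --- for example, by implementing classical binary search quantumly (giving $k = \lceil \log_2 n \rceil$ queries) --- so $\mathcal{L}(G)$ is immediately feasible whenever $k$ is at least the binary-search depth.

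For parameter regimes where no non-negative solution exists but one still wants $\mathcal{L}(G)$ feasible, so that the sign of $\beta^{*}$ certifies infeasibility of the original program (as used in the paper for $n = 7266$), I would instead construct polynomials satisfying just the equality constraints directly. Using \cref{prop:redundant_eqs} together with the eigenspace decomposition of the involution $V_n$ into its $\pm 1$ eigenspaces $S$ and $A$, each forward-step equation at time $t$ constrains only the projection of $a^{(t)} - a^{(t-1)}$ onto one eigenspace (determined by the parity of $t$). Starting from the fixed boundary data $a^{(0)}$ and $a^{(k)}$ and choosing the intermediate projections freely within the corresponding eigenspace, one can propagate a consistent solution through the chain of pair-constraints.

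The main subtlety is verifying compatibility with the normalization constraint \cref{eq:lp-constant-term}: the zeroth coordinate must take the same value across all $t$. This follows because $V_n$ fixes the $0$-th coordinate (so $a_0^{(t)}$ lies entirely in the symmetric projection $S$), and the boundary polynomials $F_n$ and $1$ have matching constant terms, so the required value propagates consistently through the $S$-component pair-constraints without conflict.
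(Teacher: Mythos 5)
Your proposal is correct and follows essentially the same route as the paper: reduce feasibility to the equality constraints alone (taking $\beta$ to be the minimum over the finite set $G$), then observe via \cref{prop:redundant_eqs} that each forward-step constraint only pins down one $V_n$-eigenspace component of $a^{(t)} - a^{(t-1)}$, leaving enough freedom to connect the fixed endpoints while keeping $a_0^{(t)}$ constant. The detour through binary search and \cref{thm:fggs_polys} is unnecessary (and, as you note, does not cover the regimes where the LP is actually needed), but your eigenspace-propagation argument in the final two paragraphs is the same construction the paper carries out by fixing every other polynomial and interpolating via \cref{eq:even-odd-trick}.
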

\begin{proof}
    Suppose $k$ is even. Consider any assignment of the polynomials $q_2, q_4, \ldots, q_{k-2}$ such that the respective constraints in \cref{eq:lp-constant-term} are satisfied. By \cref{prop:redundant-eqs} the polynomials $q_1, q_3, \ldots, q_{k-1}$ are determined through \cref{eq:even-odd-trick} and the equality constraints in \cref{eq:lp-forward-step} are all satisfied. Let $\beta$ be the minimal evaluation of any one of the polynomials $q_0, \ldots, q_k$ on the set $G$, which is finite since a Laurent polynomial can only have a pole at the origin. This gives a feasible solution.
    
    The case of odd $k$ requires more care. Suppose $k$ is odd, and consider the constraint in \cref{eq:lp-forward-step} for $t = 1$. By the proof of \cref{prop:redundant-eqs} this equation is equivalent to
    \begin{align}
        (I - V_n)(a^{(1)} - a^{(0)}) &= 0.
    \end{align}
    Since the matrix $I - V_n$ is singular, this equation has infinitely many solutions for $a^{(1)}$. Since the above equation imposes no constraints on the coefficient $a^{(1)}_0$, we can fix a solution $a^{(1)}$ satisfying $a^{(1)}_0 = 2$. Next, assign some values to the coefficients of the polynomials $q_3, q_5, \ldots, q_{k-2}$ such that the respective constraints in \cref{eq:lp-constant-term} are satisfied. By \cref{prop:redundant-eqs} our choices determine the remaining polynomials ($q_2, q_4, \ldots, q_{k-1}$), and the equality constraints in \cref{eq:lp-forward-step} are all satisfied. As before, $\beta$ can be chosen to be the minimal value of the polynomials on the set $G$.
\end{proof}

The linear program $\mathcal{L}(G)$ provides an alternative way of certifying that there is no exact $k$-query translation-invariant algorithm for searching an $n$-element list. Indeed, if $\mathcal{L}(G)$ has a negative optimal objective value $\beta^{*} < 0$ then there do not exist nonnegative polynomials $q_0, \ldots, q_k$ satisfying the conditions of \cref{thm:fggs-pos-polys}. The converse, however, is not true: $\mathcal{L}(G)$ having a positive objective value $\beta^{*} > 0$ does not imply that there exists an exact $k$-query translation-invariant algorithm for searching an $n$-element list, because the polynomial could be negative at points outside $G$. 

We can overcome this limitation as follows. Suppose that for some choice of $G$, a solution consisting of the polynomials $q_0, \ldots, q_k$ attains a positive objective value $\beta > 0$. In the Chebyshev representation, the polynomials $q_t$ can be represented as functions from $\mathbb R \to \mathbb R$ (see for example \cite[Section 8.3]{trefethen96}). In this representation, the interval $[-1, 1]$ corresponds to the inputs on the unit circle. For each $q_t$ in the Chebyshev basis, by considering the derivative $q_t'$ we can check each of the local extrema of $q_t(x)$ within the interval $[-1, 1]$ for positivity, i.e., we can verify that $q_t'(x) = 0 \implies q_t(x) > 0$. If we do the same on the boundaries, this certifies that the polynomial is nonnegative on the unit circle. In this way, any solution returned by the linear program $\mathcal{L}(G)$ can be checked for positivity.

\subsection{An Iterative Approach}\label{sec:iterative_approach}

As discussed above, the program $\mathcal{L}(G)$ can be used to certify that there is no exact, $k$-query translation-invariant quantum algorithm for searching an $n$-element list. However, this will only succeed for an appropriate choice of the set $G$. To this end, we use an iterative algorithm that heuristically constructs a sequence of increasing subsets $G_0 \subset G_1 \subset \cdots$ by solving the the corresponding sequence $\mathcal{L}(G_0), \mathcal{L}(G_1), \ldots$ of linear programs. In this process, the set of constraints keeps increasing, and the feasible sets keep shrinking. While the process is not guaranteed to terminate, in practice we find that it does after a reasonable number of iterations. \Cref{alg:iteratedlp} gives pseudocode for this procedure.

\begin{algorithm}[!htb]
\caption{Finding exact translation-invariant algorithms} \label{alg:iteratedlp}
\begin{algorithmic}[1]
\State Let $G_0 \subset \{z \in \mathbb{C} : |z|=1\}$ be such that $\mathcal{L}(G_0)$ is bounded
\For{$i = 0, 1, \dots$}
    \State Let $\beta^{*}_i$ be the optimal value of $\mathcal{L}(G_i)$ and $q_0, \ldots, q_k$ be polynomials attaining that value
    \If{$\beta^{*}_i < 0$} \Return $G_i$ \EndIf \label{ln:return-grid}
    \State Let $M$ be the set of minima\footnotemark{} of $q_0, \ldots, q_k$  that are negative
    \If{$M = \emptyset$} \label{ln:if-m-empty}
        \State \Return $q_0, \ldots, q_k$  \label{ln:return-polys}
    \Else 
        \State Let $G_{l+1} \coloneqq G_l \cup M$
    \EndIf
\EndFor
\end{algorithmic}
\end{algorithm}
\footnotetext{More precisely, let $M$ be the set of complex numbers on the unit circle corresponding to the minima of the Chebyshev representations of $q_0, \ldots, q_k$ restricted to $[-1,1]$ that are negative, as well as any point corresponding to the boundary of $[-1,1]$ at which a Chebyshev polynomial is negative.}

\begin{prop}
    If \cref{alg:iteratedlp} returns some set $G$ on \cref{ln:return-grid}, then there does not exist an exact $k$-query translation-invariant algorithm for searching an $n$-element list. Conversely, if polynomials $q_0, \ldots, q_k$ are returned on \cref{ln:return-polys} then these polynomials witness the existence of an exact $k$-query translation-invariant algorithm for the $n$-element ordered search problem.
\end{prop}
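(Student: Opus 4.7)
The plan is to recognize $\mathcal{L}(G)$ as a linear programming relaxation of the feasibility program of \cref{thm:fggs_polys}, in which the unit-circle non-negativity condition on each $q_t$ is weakened to pointwise evaluations at the finite grid $G$, while the equality constraints (initial, query, final, and constant-term normalization) are carried over essentially verbatim. Both implications then reduce to standard relaxation arguments.

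For the first implication, I would argue by contradiction: assume the algorithm returns $G_i$ on \cref{ln:return_grid}, so $\beta^*_i < 0$, and suppose there exists an exact $k$-query translation-invariant algorithm for an $n$-element list. By \cref{thm:fggs_polys} there are non-negative symmetric Laurent polynomials $q_0,\ldots,q_k$ certifying this. These polynomials satisfy every equality constraint of $\mathcal{L}(G_i)$ (up to any rescaling required to match the constant-term convention relating \cref{eq:poly-norm} to \cref{eq:lp-constant-term}) and, being non-negative on the entire unit circle, they certainly satisfy $q_t(z) \geq 0$ for every $z \in G_i$. This produces a feasible LP solution with objective value at least $0$, contradicting $\beta^*_i < 0$.

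For the converse, suppose the algorithm returns $q_0,\ldots,q_k$ on \cref{ln:return_polys}. The equality constraints of \cref{thm:fggs_polys} are inherited directly from $\mathcal{L}(G_i)$, so all that remains is to verify non-negativity on the full unit circle rather than merely at the sampled grid. This is exactly what the exit condition $M = \emptyset$ enforces: by the footnote to \cref{alg:iteratedlp}, $M = \emptyset$ means that, for each $t$, the Chebyshev representation of $q_t$ on $[-1,1]$ is non-negative at every critical point in $(-1,1)$ and at both endpoints. Because a real polynomial attains its extrema on a closed interval at critical points or endpoints, each Chebyshev representation is globally non-negative on $[-1,1]$, which under the substitution $z = e^{i\theta}$, $x = \cos\theta$ translates to non-negativity of $q_t$ on the unit circle. \cref{thm:fggs_polys} then delivers the algorithm. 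The one delicate ingredient is this Chebyshev reduction --- verifying that a symmetric Laurent polynomial on the unit circle pulls back to a genuine univariate real polynomial on $[-1,1]$ whose critical points can be located by differentiation --- which is standard (cf.\ \cite{trefethen96}) and is already invoked above the algorithm.
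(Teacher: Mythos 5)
Your proof is correct and follows essentially the same route as the paper: the infeasibility direction is the standard relaxation argument (you phrase it as a contradiction, the paper as a direct implication), and the feasibility direction transfers the equality constraints and uses the $M=\emptyset$ exit condition together with the Chebyshev critical-point check to upgrade grid non-negativity to non-negativity on the whole unit circle. Your explicit attention to the constant-term normalization and to why checking critical points and endpoints suffices is, if anything, slightly more careful than the paper's own writeup.
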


\begin{proof}
    Suppose some set $G$ is returned on \cref{ln:return-grid}. This occurs only if the optimal value of the linear program $\mathcal{L}(G)$ is negative. This in turn implies that there are no nonnegative polynomials satisfying the constraints in \cref{eq:lp-initial,eq:lp-forward-step,eq:lp-final,eq:lp-constant-term,eq:lp-non-negativity} of the linear program $\mathcal{L}(G)$. Since $\mathcal{L}(G)$ is a relaxation of the program in \cref{thm:fggs-pos-polys}, it follows that there does not exist an exact $k$-query translation-invariant algorithm for searching an $n$-element list.
    
    Now suppose that polynomials $q_0, \ldots, q_k$ are returned on \cref{ln:return-polys} of the algorithm. These polynomials constitute a feasible solution of $\mathcal{L}(G)$ for some set $G$, so they satisfy \cref{eq:lp-initial,eq:lp-forward-step,eq:lp-final,eq:lp-constant-term}. The constraints in \cref{eq:lp-initial,eq:lp-forward-step,eq:lp-final,eq:lp-constant-term} imply \cref{eq:fggs-pos-polys-init-state,eq:fggs-pos-polys-forward-step,eq:fggs-pos-polys-final-state,eq:fggs-pos-polys-norm}, respectively. By construction the polynomials are symmetric and real-valued. They are also nonnegative (by \cref{ln:if-m-empty}), as their Chebyshev representations are nonnegative in $[-1,1]$. Thus, the polynomials $q_0, \ldots, q_k$ are a feasible solution for the program of \cref{thm:fggs-pos-polys}, witnessing that there exists a $k$-query exact translation-invariant algorithm for the $n$-element ordered search problem.
\end{proof}

\subsection{Computational Results}

To identify the maximal value of $n$ for which there exists an exact, translation-invariant algorithm for the $n$-element ordered search problem using $k = 5$ queries, we implemented \cref{alg:iteratedlp} and wrapped it with a binary search over $n$. Our program was built in Python using the CVXPY and MOSEK libraries for modeling and solving the linear program $\mathcal{L}(G)$ \cite{diamond2016cvxpy,agrawal2018rewriting,mosek}. 

Our algorithm ran on a machine with an Intel Xeon Silver 4216 2.10 GHz processor and 128 GB RAM. Our approach needed some intermittent manual oversight, primarily due to the scheduling limitations of the machine. In total the search took about a month (including downtime) and identified $n = 7265$ as the maximal instance size searchable with $5$ queries. Our program returned both the witnessing polynomials $q_0, \ldots, q_5$ (depicted in \cref{fig:poly_coeffs}) for $n = 7265$, as well as a set $G$ for which the linear program $\mathcal{L}(G)$ (with $n = 7266$) has a negative optimal value $\beta^* < 0$, certifying the non-existence of 5-query algorithms for searching a list of 7266 elements. By \cref{cor:monotonicity}, it follows that there is also no 5-query algorithm for searching a list of any length more than 7266.

Our code and data are available on GitHub at \url{https://github.com/jacarolan/ordered_search_public}.

\begin{figure}
    \centering
    \includesvg[width=0.96\linewidth]{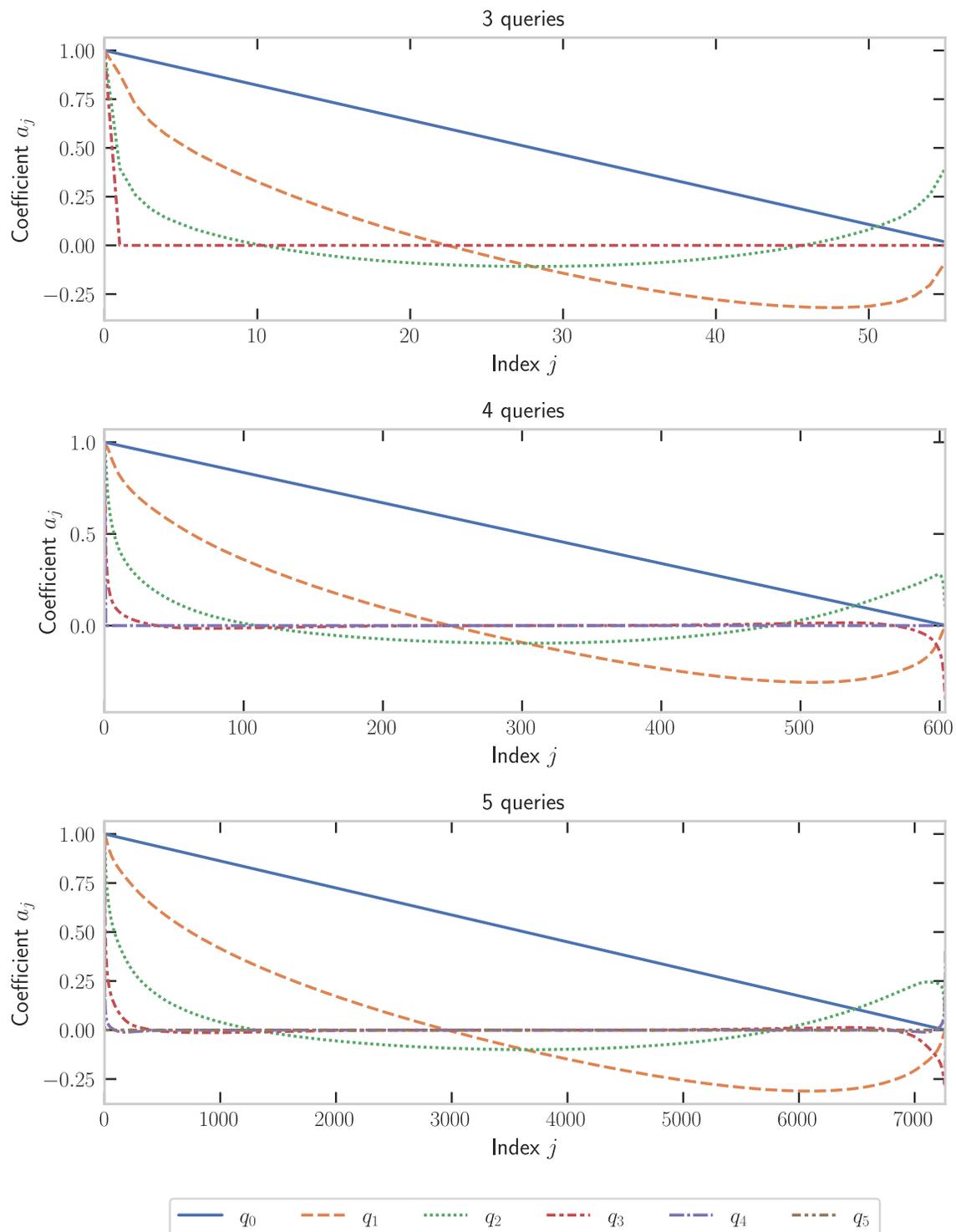}
    \caption{Coefficients of the Laurent polynomials corresponding to exact 3-, 4-, and 5-query algorithms. The corresponding instances sizes are 56, 605, and 7265, respectively. For each polynomial $q_t(z) = \sum_{j=-n}^{n} a^{(t)}_j z^j$ we only plot the coefficients $a_{j}^{(t)}$ for nonnegative indices $j$, as $a^{(t)}_j = a^{(t)}_{-j}$ for each of these polynomials.}\label{fig:poly_coeffs}
\end{figure}

\section{Future Work}

In this paper we showed that translation-invariant algorithms are optimal among all quantum query algorithms for the ordered search problem, for all choices of the error parameter $\varepsilon \geq 0$. We also improved the best upper bound for exact algorithms from (approximately) $0.433 \log_{2}{n}$ to $0.390 \log_{2}{n}$ by exhibiting a 5-query algorithm that can search a list of 7265 elements exactly. 

We mention some potential avenues for further research. First, while we narrowed the gap between upper and lower bounds for ordered search, the precise quantum query complexity of ordered search remains unknown.  One could hope to give a better algorithm, or prove a stronger lower bound. On the algorithmic side, our results show that translation invariance can be taken without loss of generality. This motivates developing tools for better understanding translation-invariant algorithms. Alternatively, one could study bounded- or zero-error\footnote{Zero error here refers to Las Vegas algorithms, which succeed with certainty after a randomized number of queries.} algorithms through the lens of translation invariance. For instance, one could compute the translation invariant analog of the algorithm developed in \cite{bh07}.

On the lower bounds side, we know that the adversary method cannot yield better lower bounds than what is already known \cite{ct08}. However, the error dependence of quantum lower bounds for ordered search has an unusual scaling. In particular, for general algorithms with success probability less than $0.89$, the best known lower bound is $\frac{1}{12} \log_{2}{n}$ \cite{amb99}, which is a constant factor weaker than the $\frac{1}{\pi}\ln{n}$ bound for exact algorithms. Classically, a $\log_{2}{n}-O(1)$ bound applies to any algorithm with constant success probability. It would be interesting to sharpen the error dependence of quantum lower bounds for ordered search.

\section*{Acknowledgments}

JC and AMC acknowledge support from the U.S.\ Department of Energy (grant DE-SC0020264). AMC and MKD acknowledge support from the U.S.\ Department of Energy, Office of Science, Accelerated Research in Quantum Computing, Fundamental Algorithmic Research toward Quantum Utility (FAR-Qu). LS acknowledges the support of the Hartree postdoctoral fellowship and the Joint Center for Quantum Information and Computer Science, where much of this work was done.

\appendix
\section{Uniqueness of the Initial Matrix}

\begin{lem}\label{lem:osp_bss_simplified_first_constraint_unique_sln}
    The semidefinite program 
    \begin{align}
        &\tr_l{A} + \tr_{n-l}{A} = 1 \quad (0 \leq l < n) \label{eq:bss_osp_simplified_initial_constraint_duplicate} \\ 
        &A \in \psd{n}
    \end{align}
    has the unique solution $A = \frac{1}{n} J_n$.
\end{lem}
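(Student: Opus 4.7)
The plan is to first verify that $A = J_n/n$ is indeed a solution, and then use a spectral/Cauchy-Schwarz argument, anchored by the all-ones vector, to argue uniqueness.

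For feasibility, I would just check directly: $\tr_0(J_n/n) = 1$ and $\tr_n(J_n/n) = 0$ handles $j=0$, while for $1 \le j \le n-1$ we have $\tr_j(J_n/n) = (n-j)/n$ and $\tr_{n-j}(J_n/n) = j/n$, summing to $1$. And $J_n/n$ is visibly PSD (rank one with eigenvalue $1$).

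For uniqueness, the key observation is that the full set of constraints pins down both $\tr(A)$ and $e^\top A e$, where $e = (1,\ldots,1)^\top$. The $j=0$ constraint, together with $\tr_n(A) = 0$, gives $\tr(A) = \tr_0(A) = 1$. Summing the constraints over $j = 1,\ldots,n-1$ and using that $A$ is symmetric (so $\tr_k(A) = \tr_{-k}(A)$), each off-diagonal generalized trace $\tr_k(A)$ with $1 \le k \le n-1$ appears exactly twice, giving $2\sum_{k=1}^{n-1}\tr_k(A) = n-1$. Therefore
\begin{align}
    e^\top A e \;=\; \sum_{k=-(n-1)}^{n-1} \tr_k(A) \;=\; \tr_0(A) + 2\sum_{k=1}^{n-1}\tr_k(A) \;=\; 1 + (n-1) \;=\; n.
\end{align}

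Now I would finish with a standard rank-one squeeze. Writing the spectral decomposition $A = \sum_i \lambda_i v_i v_i^\top$ with $\lambda_i \ge 0$ and $\sum_i \lambda_i = \tr(A) = 1$, Cauchy-Schwarz gives $(v_i^\top e)^2 \le \|v_i\|^2\|e\|^2 = n$, so
\begin{align}
    n \;=\; e^\top A e \;=\; \sum_i \lambda_i (v_i^\top e)^2 \;\le\; n\sum_i \lambda_i \;=\; n.
\end{align}
Equality forces $(v_i^\top e)^2 = n$ for every $i$ with $\lambda_i > 0$, which by the equality case of Cauchy-Schwarz means $v_i$ is parallel to $e$. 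Orthonormality of the $v_i$ then forces $A$ to be rank one: $A = \lambda\, (e/\sqrt n)(e/\sqrt n)^\top = (\lambda/n) J_n$, and the trace condition pins down $\lambda = 1$, yielding $A = J_n/n$.

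There isn't really a hard step here; the only thing to watch is the double-counting in the sum of the constraints (each $\tr_k$ for $1 \le k \le n-1$ is hit by both $j = k$ and $j = n-k$), and the fact that $\tr_n(A) = 0$ by the convention $\tr_j(A) = 0$ for $|j| \ge n$ stated in the preliminaries.
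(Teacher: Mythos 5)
Your proof is correct, but it takes a genuinely different route from the paper's. You extract only two scalar consequences of the constraint family --- $\tr(A)=1$ from the $j=0$ constraint (using $\tr_n(A)=0$ by convention), and $e^\top A e = n$ from summing the constraints over $j=1,\dots,n-1$ with the double-counting correctly accounted for --- and then invoke the spectral decomposition plus the equality case of Cauchy--Schwarz to force $A$ to be rank one along the all-ones vector. The paper instead argues entrywise: positive semidefiniteness of each $2\times 2$ principal minor gives $a_{ij}\le\sqrt{a_{ii}a_{jj}}\le\tfrac12(a_{ii}+a_{jj})$ by AM--GM, and then each individual constraint $\tr_l(A)+\tr_{n-l}(A)=1$ is squeezed against $\tr(A)=1$, forcing equality throughout and hence $a_{ii}=a_{ij}=1/n$ directly. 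Your argument is slicker and shows the stronger fact that feasibility is already pinned down by just two aggregated linear functionals of $A$ (this is the standard ``unit-trace PSD matrix with $e^\top A e=n$ must equal $J_n/n$'' observation); the paper's argument is more elementary in that it avoids the spectral theorem entirely and reads off the entries of $A$ one at a time. Either proof is complete and valid.
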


\begin{proof}
    Since $A$ is positive semidefinite, all of its diagonal entries (being principal minors) are nonnegative. For distinct $i, j \in \{0, \ldots, n-1\}$, by considering the principal minor 
    \begin{align}
    0 \leq 
        \begin{vmatrix}
            a_{ii} & a_{ij} \\ 
            a_{ji} & a_{jj}
        \end{vmatrix} = a_{ii} a_{jj} - a_{ij}^2,
    \end{align}
    we see that $a_{ij} \leq \sqrt{a_{ii} a_{jj}}$. By the AM-GM inequality, 
    \begin{align}
        a_{ij} &\leq \sqrt{a_{ii} a_{jj}} \label[ineq]{eq:psd_mx_entry_upper_bound} \\ 
        &\leq \frac{a_{ii} + a_{jj}}{2}. \label[ineq]{eq:am_gm}
    \end{align}
    Then by \cref{eq:bss_osp_simplified_initial_constraint_duplicate}, 
    \begin{align}
        1 &= \tr_l{A} + \tr_{n-l}{A} \\ 
        &= \sum_{j=0}^{n-1-l} a_{j,l+j} + \sum_{j=0}^{l-1} a_{j,n-l+j} \\ 
        &\leq \frac{1}{2} \Biggl(\sum_{j=0}^{n-l-1} (a_{jj} + a_{l+j,l+j}) + \sum_{j=0}^{l-1} (a_{jj} + a_{n-l+j,n-l+j}) \Biggr) \\
        &= \tr{A} \\
        &= 1.
    \end{align}
    We conclude that \cref{eq:psd_mx_entry_upper_bound,eq:am_gm} hold with equality for all distinct $i, j\in \{0,\ldots,n-1\}$. Therefore $a_{ii} = a_{jj} = 1/n$, and in particular $a_{ij} = 1/n$ as well.
\end{proof}

\bibliography{ref}
\bibliographystyle{alphaurl}

\end{document}